\documentclass[twocolumn,aps,american,pra,floatfix,nofootinbib,superscriptaddress,longbibliography,10pt]{revtex4-1}

\usepackage{amsfonts}
\usepackage{amsmath}
\usepackage{amssymb}
\usepackage{amsthm}
\usepackage{bbm}
\usepackage{bm}
\usepackage{mathrsfs}
\usepackage{MnSymbol}
\usepackage{accents}
\usepackage{braket}
\usepackage{multirow}
\usepackage[table]{xcolor}
\usepackage{rotating}  % in preamble
\usepackage{float}
\usepackage{soul}

\usepackage{epsfig}
\usepackage{graphics}
\usepackage{graphicx}
\usepackage{subfigure}
\usepackage{tikz}
\usetikzlibrary{arrows.meta}
\usetikzlibrary{quotes,fit}
\usetikzlibrary{datavisualization.formats.functions}
\usepackage[matrix,frame,arrow]{xy}
\usepackage[compat=0.6]{yquant}
\useyquantlanguage{groups}
\usepackage{times}
\usepackage{algorithm}
\usepackage{algpseudocode}
\algrenewcommand\algorithmicrequire{{Data:}}
\algrenewcommand\algorithmicensure{Initialize:}
\usepackage{color}
\usepackage{xcolor}

\usepackage[pdfstartview=FitH]{hyperref}
\usepackage{hypernat}
\hypersetup{
    colorlinks=true,   % false: boxed links; true: colored links
    linkcolor=cyan,    % color of internal links
    citecolor=magenta, % color of links to bibliography
    filecolor=magenta, % color of file links
    urlcolor=cyan,     % color of external links
    runcolor=cyan
}
\usepackage[capitalise]{cleveref} % Should be loaded after hyperref

\newtheorem{mytheorem}{Theorem}
\newtheorem{mylemma}{Lemma}

\newcommand{\mc}{\mathcal}
\newcommand{\wSimon}[2]{\ensuremath{\mathsf{w}_{{#1}}\textnormal{Simon-}{#2}}}
\newcommand{\NTS}{\mathrm{NTS}}
\newcommand{\NTSClb}{\NTS_C^{\mathrm{lb}}}
\newcommand{\NTSIQ}{\NTS_{\textnormal{IQ}}}
\newcommand{\EEB}{E_{\textnormal{EB}}}
\newcommand{\beq}{\begin{equation}}
\newcommand{\eeq}{\end{equation}}

\newcommand{\mcO}{\mathcal{O}} % $\mcO$ is a box labeling a place to insert the oracle.
\newcommand{\mcOf}{\mathcal{O}_f} % $\mcOf = \mcO^f$ is the unitary implementing the actual oracle for function $f$.

\newcommand{\expv}[1]{\langle #1\rangle} % expectation value
\DeclareMathOperator{\HW}{HW} % $\HW(b)$ is the Hamming weight of a bitstring $b$.

\def\Pr{\mathrm{Pr}}

\graphicspath{ {images/} }

\makeatletter
\AddToHook{cmd/appendix/before}{\def\cref@section@alias{appendix}\def\cref@subsection@alias{appendix}}
\makeatother

\begin{document}

\title{
Demonstration of Exponential Quantum Speedup with Constant-Depth Compiled Circuits for Simon's Problem}

\author{Phattharaporn Singkanipa}
\affiliation{Department of Physics, Center for Quantum Information Science \& Technology,
University of Southern California, Los Angeles, CA 90089, USA}

\author{Victor Kasatkin}
\affiliation{Viterbi School of Engineering, Center for Quantum Information Science \& Technology,
University of Southern California, Los Angeles, CA 90089, USA}

\author{Daniel A. Lidar}
\affiliation{Departments of Electrical \& Computer Engineering, Chemistry, Physics \& Astronomy, and Center for Quantum Information Science \& Technology,
University of Southern California, Los Angeles, CA 90089, USA}
\affiliation{Quantum Elements, Inc., Thousand Oaks, CA}

\date{\today}

\begin{abstract}
We demonstrate exponential algorithmic quantum speedup for a restricted-Hamming-weight
version of Simon's problem, in which the hidden string $b$ is promised to satisfy
$\HW(b)\le w$ for a Hamming-weight cutoff $w$,
on present-day superconducting quantum processors.
We introduce a hardware-aware compilation strategy that reduces
the quantum part of each Simon query circuit to constant depth.
The resulting compiled circuits have $O(1)$ depth,
require only linear nearest-neighbor connectivity,
map directly onto common device layouts,
and avoid additional routing and SWAP overhead.
Implemented on IBM's $156$-qubit Boston and $120$-qubit Miami processors,
these circuits achieve sufficient fidelity
to exhibit algorithmic quantum speedup without error suppression.
Using the number-of-queries-to-solution (NTS) metric,
we observe exponential speedup over the classical lower-bound benchmark
for all restricted-Hamming-weight cutoffs $w\ge 4$ on Boston
and across low-to-intermediate Hamming-weight cutoffs on Miami;
at higher Hamming-weight cutoffs on Miami, we still observe polynomial speedup.
The same construction also enables unrestricted instances of Simon's problem,
corresponding to $w=n$ for problem size $n$,
over the finite problem-size ranges for which our NTS computation is feasible;
in this regime, the observed scaling advantage is not limited to the restricted-Hamming-weight setting.
These results show that careful hardware-aware 
compilation can make quantum speedup experimentally accessible
for a canonical hidden-subgroup problem in the NISQ regime.
\end{abstract}

\maketitle
\section{Introduction}
Demonstrating an algorithmic quantum speedup,
one in which a quantum algorithm solves a computational problem
more efficiently than any classical algorithm as the problem size grows,
is a central goal of quantum computing.
While theoretical quantum advantages have been established for decades
\cite{Deutsch:92,bernsteinQuantumComplexityTheory1997,Simon:94,Grover:97a,Shor:97,childs2003exponential,Van-Dam:2006aa,Harrow:2009aa,montanaroQuantumAlgorithmsOverview2016,Bravyi:2017aa,Bravyi:2020aa,Bharti:2022aa,Daley:2022vu},
realizing them on physical hardware is challenging due to the noise
and limited scale of current noisy intermediate-scale quantum (NISQ) \cite{Preskill2018} devices.
This goal is distinct from quantum supremacy experiments
\cite{aaronson2016,Arute:2019aa,wu2021strong,Zhong:2020aa,Zhong:2021wv,morvan2023phase},
which demonstrate classically intractable sampling tasks without addressing computational utility,
and from device benchmarking
\cite{figgattComplete3QubitGrover2017,wrightBenchmarking11qubitQuantum2019,royProgrammableSuperconductingProcessor2020,Pelofske2022,Lubinski2021,doi:10.1021/acs.chemrev.4c00870},
which characterizes hardware performance without resolving the question of algorithmic scaling
\cite{Barak:spoofing,Zlokapa:2023aa,Aharonov:22}.

Experimental progress toward this goal has been made across a range of hardware platforms
\cite{Albash:2017aa,King:2019aa,Saggio:2021vh,Centrone:2021tq,Maslov:2021aa,Xia:2021ux,Huang:2021,Ebadi:22,zhouExperimentalQuantumAdvantage2022,King:22,Kim:2023aa},
where better-than-classical success probabilities have been reported
for specific problem instances.
However, many of these demonstrations compare against a restricted set
of classical algorithms or rely on complexity-theoretic assumptions
rather than direct empirical evidence of a scaling advantage.
A rigorous demonstration of algorithmic quantum speedup
requires a performance comparison against the best known classical algorithms
over a broad and increasing range of problem sizes,
which remains a stringent benchmark that few experiments have met.

Conjecture-free quantum scaling advantages have recently been demonstrated
on IBM Quantum superconducting processors using hardware-aware error suppression techniques.
Ref.~\cite{pokharel2022demonstration} reported a scaling advantage for the
Bernstein-Vazirani algorithm using dynamical decoupling (DD)
\cite{Viola:98,Viola:99,Zanardi:1999fk,Vitali:99,Duan:98e},
while Ref.~\cite{PhysRevX.15.021082} extended this approach
to restricted-Hamming-weight instances of Simon's problem,
which belongs to the class of Abelian hidden subgroup problems \cite{Jozsa:2001aa}.
The compiled Simon's problem circuit in Ref.~\cite{PhysRevX.15.021082}
had a star-type CNOT structure with depth $O(\HW(b))$ for a hidden string $b$.
Embedding this structure onto the heavy-hex connectivity architecture of IBM Quantum processors
introduced substantial SWAP overhead,
making error suppression indispensable for maintaining circuit fidelity.
Specifically, DD was combined with measurement error mitigation
(MEM) \cite{PhysRevLett.119.180509} to demonstrate algorithmic speedup
in the resulting deep circuits.

In this work, we show that,
under the same compiler rules as in Ref.~\cite{PhysRevX.15.021082},
Simon query circuits can be compiled so that their quantum part has constant depth, without changing the underlying oracle problem.
Our construction requires only linear connectivity and yields compiled query circuits with $O(1)$ quantum depth, independent of $\HW(b)$.
These circuits use only two entangling layers, map natively onto common two-dimensional superconducting-qubit layouts,
and avoid additional routing and SWAP overhead.
Our results show that careful hardware-aware compilation can significantly
enhance quantum-classical scaling separations on NISQ devices,
even without error suppression,
and underscore the importance of co-designing algorithms with hardware constraints in mind.

In the original formulation of Simon's problem,
one is given a function $f \colon \{0,1\}^n \to \{0,1\}^n$
promised to be either 1-to-1 or 2-to-1,
such that $f_b(x) = f_b(y)$ if and only if $x = y$ or $x = y \oplus b$,
where $b \in \{0,1\}^n$ is a hidden bitstring.
The task is to determine which case holds and,
in the 2-to-1 case, to identify $b$.
Here we focus on the 2-to-1 case and restrict $b$ to have Hamming weight at most $w \le n$,
yielding the restricted-HW version of Simon's problem \wSimon{w}{n}.
The original Simon's problem is recovered by setting $w = n$.

We analyze performance using an oracle-query metric, the number-of-queries-to-solution (NTS)~\cite{PhysRevX.15.021082},
and show that this setting enables a clear quantum scaling advantage even on contemporary superconducting qubit architectures.
The same compiled-circuit construction also allows the original Simon's problem to be implemented up to a certain problem size,
so the observed scaling advantage is not limited to the restricted-Hamming-weight setting in those cases.

The speedup is quantified in a guessing game setting introduced in \cite{PhysRevX.15.021082},
whose rules are briefly summarized in \cref{sec:speedup-setup}.
In particular, we restate the game settings, the speedup metric NTS,
and the classical and quantum algorithms for solving Simon's problem,
along with the methodology for comparing the performance of each player.
In \cref{sec:o1-oracle}, we present the main contribution of this work:
a compilation scheme that yields constant-depth Simon query circuits,
together with its experimental implementation.
The speedup results are presented in \cref{sec:results},
followed by the conclusion in \cref{sec:conclusion}.
Additional technical details and supplemental results are provided in the appendices.

\section{Quantum speedup in a guessing game}\label{sec:speedup-setup}

\subsection{Rules of the game}
We adopt the single-player guessing game introduced in Ref.~\cite{PhysRevX.15.021082}. 
At the start of each round, a hidden string $b$ is drawn uniformly at random from the allowed set $S$ of nonzero hidden strings for the problem under consideration. Conditioned on this $b$, a function $f\colon \{0,1\}^n \to \{0,1\}^n$ is drawn uniformly at random from the set of 2-to-1 functions satisfying $f(x)=f(y)$ if and only if $x=y$ or $x=y\oplus b$. The function is not revealed to the player, who instead has oracle access to evaluate $f(x)$ for any input $x \in \{0,1\}^n$.
The player may perform any number of oracle queries, along with arbitrary classical computation, before submitting a guess for the hidden bitstring $b$. The correctness of the guess is then verified, after which a new function is drawn and the game proceeds to the next round. The objective of the player is to maximize the number of correct guesses while minimizing the total number of oracle queries.

The performance of each player is quantified by the average score per oracle query, $\NTS^{-1}$, with a higher value indicating better performance, where
\begin{equation}
  \NTS = \frac{\expv{Q}}{\expv{P}} .
  \label{eq:NTS.def}
\end{equation}
Here, NTS denotes the \emph{number-of-oracle queries-to-solution}, $\expv{\bullet}$ denotes the expectation value averaged over many rounds of the game, $Q$ is the number of oracle queries per round, and $P$ is the score obtained in a round. To discourage uninformed guessing without consulting the oracle, incorrect guesses are penalized by setting $P = 1$ for a correct guess and $P = -p_r / (1 - p_r)$ for an incorrect guess, where $S$ denotes the allowed set of hidden strings and $p_r = 1/|S|$ is the probability that a uniformly random guess is correct. In the restricted-HW setting considered below, $|S| = N_w$.

Three players compete in this game: (i) a classical player, whose oracle access consists of submitting a query $x$ and receiving $f(x)$ in return; (ii) a noiseless quantum player, whose oracle is the unitary $\mcOf$ acting as $\mcOf \ket{x}\ket{a} = \ket{x}\ket{a \oplus f(x)}$;
and (iii) a NISQ player, who has indirect access to a noisy quantum processing unit (QPU) on which circuits containing $\mcOf$ are executed after compilation.

\subsection{Classical algorithm}
To generalize to \wSimon{w}{n}, we introduce $N_w$ as the cardinality of the set $S$ of possible values of $b \neq 0^n$, restricted to those satisfying $\HW(b) \le w$, with $w\le n$:
\begin{equation}
N_w \equiv \sum_{j=1}^{w}\binom{n}{j},
\label{eq:N_w}
\end{equation}
which simplifies to $N_n\equiv 2^n-1$ in the original Simon's problem.

\begin{mytheorem}[Lower bound on $\NTS_C$ \cite{PhysRevX.15.021082}]
  \label{thm:cl-lower-bound}
For the trivial case $N_w=1$, one has $\NTS_C=0$. For $N_w\ge 2$, the lower bound on the classical NTS is
\begin{equation}
\NTS_C \ge k_{\min}-\frac{k_{\min}(k_{\min}-1)(k_{\min}-2)}{6N_w},
    \label{eq:cl-expect2}
\end{equation}
where
\begin{equation}
     k_{\min}(N_w)\equiv \left\lceil \sqrt{2N_w-\frac{7}{4}}+\frac{1}{2}\right\rceil
     \label{eq:cl-worst}
\end{equation}
is a lower bound on the worst-case number of queries needed by a classical player to solve \wSimon{w}{n}.
\end{mytheorem}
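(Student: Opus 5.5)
The plan is to analyze the best classical strategy round by round and lower-bound the ratio $\expv{Q}/\expv{P}$. The key structural fact is that classical queries give information only through collisions. If the player has queried $k$ distinct inputs $x_1,\dots,x_k$ and seen no collision, the only thing learned is that $b \notin \{x_i\oplus x_j : i<j\}$. This excludes at most $\binom{k}{2}$ candidates from $S$. Since $f$ is drawn uniformly given $b$, the distinct output values carry no further information about $b$. A collision $f(x_i)=f(x_j)$ reveals $b=x_i\oplus x_j$ exactly.

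\textbf{Success probability after $k$ queries.} I would first show that the probability of a collision within the first $k$ queries is at most $\binom{k}{2}/N_w$. The argument is that, conditioned on no collision so far, the $(j{+}1)$-th query produces a collision with probability at most $j/(\text{number of remaining candidates})$. Telescoping gives $\Pr[\text{collision by }k]\le \binom{k}{2}/N_w$, with equality when all pairwise XORs are distinct elements of $S$. If there is no collision, the best guess is uniform over the remaining $N_w-\binom{k}{2}$ candidates.

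\textbf{Worst-case bound $k_{\min}$.} Certainty requires that every candidate be either a pairwise XOR or the unique leftover candidate, so $\binom{k}{2}\ge N_w-1$. Solving $k(k-1)/2\ge N_w-1$ gives $k\ge \tfrac12+\sqrt{2N_w-\tfrac74}$, which is exactly $k_{\min}$ as defined in the statement. When $N_w=1$ the answer is known without queries, so $\NTS_C=0$.

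\textbf{Bounding $\NTS_C$ for $N_w\ge2$.} I would treat an adaptive strategy as a stopping rule. Summing over queries, the expected score with the penalty $-p_r/(1-p_r)$ is at most the collision probability plus the guessing gain. The penalty is designed so that a uniform guess among $m$ remaining candidates contributes net score $(1/m)(1+p_r/(1-p_r))-p_r/(1-p_r)$, which is roughly $\binom{k}{2}/N_w$-scaled in the same way. This lets me argue that $\expv{P}$ is at most the probability of having effectively pinned down $b$. The target is a bound of the form $\expv{P}\le \sum_j \Pr[\text{identified at step } j]$ together with $\expv{Q}\ge \sum_j j\,\Pr[\text{stop at } j]$.

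For the optimal stopping strategy, which is to keep querying until a collision occurs or $k_{\min}$ queries are reached, I would compute $\expv{Q}=\sum_{k=0}^{k_{\min}-1}\Pr[\text{no collision in first }k]$. Using $\Pr[\text{no collision in first } k]\ge 1-\binom{k}{2}/N_w$, this sum is at least $k_{\min}-\frac{1}{N_w}\sum_{k<k_{\min}}\binom{k}{2}$. The sum evaluates to $\binom{k_{\min}}{3}=k_{\min}(k_{\min}-1)(k_{\min}-2)/6$, which produces the stated bound with $\expv{P}\le 1$ normalized to the solved case.

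\textbf{Main obstacle.} The hard step is justifying that stopping early and guessing never improves the ratio $\expv{Q}/\expv{P}$, because the penalty makes partial guesses worth less than their query savings. I would attempt this by showing that the marginal score per query from continuing is at least the score gained by an early guess, using the specific value of $p_r$. From there a linear-fractional argument shows that the optimum is attained by the deterministic full strategy.
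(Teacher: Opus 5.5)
Your preliminary steps are correct. Only collisions carry information. After $k$ collision-free queries, $b$ is uniform on $m_k\ge N_w-\binom{k}{2}$ survivors. The telescoping product gives $\Pr[\text{no collision in the first }k]=m_k/N_w$, and $k_{\min}$ is the least $k$ with $\binom{k}{2}\ge N_w-1$.

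But these facts only bound $\expv{Q}/\expv{P}$ for strategies that query until $b$ is certain, where $\expv{P}\le 1$ costs nothing. A strategy that stops after $K<k_{\min}$ collision-free queries and then guesses has $\expv{Q}<k_{\min}-\binom{k_{\min}}{3}/N_w$. The entire content of the theorem is that its $\expv{P}$ falls enough to compensate. You defer exactly this to your ``main obstacle'' without resolving it, and you simply assert that querying to $k_{\min}$ is optimal.

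The route you sketch for that step would also fail, because the target $\expv{P}\le\sum_j\Pr[\text{identified at step }j]$ is false. Since $p_r/(1-p_r)=1/(N_w-1)$, one has $\expv{P}=(N_w\Pr[\text{correct}]-1)/(N_w-1)$. So a guess among $m$ survivors earns $(N_w-m)/\bigl(m(N_w-1)\bigr)>0$, and a strategy stopping with $m_K\ge1$ survivors has exactly $\expv{P}=(N_w-m_K)/(N_w-1)$. This is strictly above its identification probability $(N_w-m_K)/N_w$. For example, with $N_w=3$, two queries followed by a guess give $\expv{P}=1/2$ but identify $b$ only with probability $1/3$. Comparing a ``marginal score per query'' with the value of an early guess is also not the right test for a ratio objective.

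What is needed is an explicit optimization over the stopping time. Fresh oracle outputs are independent of $b$, so adaptivity on them is just randomization, and the mediant inequality reduces you to deterministic strategies. Such a strategy stops at the first collision and otherwise after a fixed $K$, then guesses a survivor, so $\NTS=\frac{(N_w-1)\sum_{j<K}m_j}{N_w\min(N_w-m_K,\,N_w-1)}$.
\begin{itemize}
\item For $K\ge k_{\min}$, bound the denominator by $N_w(N_w-1)$ and keep only the terms $j<k_{\min}$, using $m_j\ge N_w-\binom{j}{2}$. This gives exactly $k_{\min}-\binom{k_{\min}}{3}/N_w$.
\item For $K\le1$ the score is zero.
\item For $2\le K<k_{\min}$ you get $\NTS\ge(N_w-1)g(K)$ with $g(K)=\bigl(K-\binom{K}{3}/N_w\bigr)/\binom{K}{2}=\frac{2}{K-1}-\frac{K-2}{3N_w}$. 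This is decreasing in $K$, so it suffices to check $K=k_{\min}-1$. You cannot instead compare with $K=k_{\min}$, because $\binom{k_{\min}}{2}$ may exceed $N_w-1$ and the score saturates at $1$.
\end{itemize}
For the last case, write $k=k_{\min}$ and $N_w=\binom{k-1}{2}+1+t$ with $t\ge1$. A direct computation gives $(N_w-1)g(k-1)-\bigl(k-\binom{k}{3}/N_w\bigr)=\frac{2t}{k-2}-\frac{tk+3}{3N_w}$. This is positive because $6tN_w-(k-2)(tk+3)=6t^2+t(2k^2-7k+12)-3k+6>0$ for $t\ge1$, $k\ge3$. Without this step or an equivalent one, your argument establishes the bound for one particular strategy rather than for $\NTS_C$.
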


Following this theorem, we define
the classical lower-bound quantity used in our scaling comparison as
\begin{equation}
  \NTSClb \equiv
  \begin{cases}
  0, & N_w=1,\\[2mm]
  k_{\min}-\dfrac{k_{\min}(k_{\min}-1)(k_{\min}-2)}{6N_w}, & N_w\ge 2.
  \end{cases}
  \label{eq:NTSClb-def}
\end{equation}

In this work we compare the scaling of the $\NTS_Q$ with
$\NTSClb$ instead of $\NTS_C$ because
(1) we do not know the exact optimal value of $\NTS_C$
(for the best possible classical player), and
(2) if the quantum algorithm defeats this lower bound,
the implication is an unequivocal quantum speedup.

For fixed $w$ (independent of $n$), the dominant term in $N_w=\sum_{j=1}^{w}\binom{n}{j}$ is $\binom{n}{w}$, so $N_w=\Theta(n^w)$ as $n\to\infty$.
Consequently, $k_{\min}=\Theta(n^{w/2})$, and the lower bound in \cref{eq:cl-expect2} scales as $n^{w/2}$.
Without restricting $\HW(b)$, the classical query complexity is $\Theta(2^{n/2})$.

\subsection{Noiseless quantum algorithm}
We now describe how Simon's problem can be solved on a noiseless quantum computer.
\Cref{fig:simon-circ} shows the quantum circuit implementing this algorithm, which employs $2n$ qubits for \wSimon{w}{n}:
the first $n$ are data qubits, labeled $d_j$, and the remaining $n$ are ancilla qubits, labeled $a_j$, for $j = 0, 1, \dots, n-1$.
The box $\mc{O}_f$ denotes the oracle implementing $f(x)$ on input $x$, whose internal circuit is not accessible to the players,
where the function $f$ is associated with some hidden bitstring $b$.
Each execution of the circuit produces a uniformly random bitstring $z$ satisfying $b \cdot z = 0$.
The standard proof that $O(n)$ executions of this circuit suffice, on average, to solve Simon-$n$ on an ideal quantum computer can be found, e.g., in Ref.~\cite[App.~E]{PhysRevX.15.021082}.

\begin{figure}[h!]
    \centering
    \begin{tikzpicture}
        \begin{yquant}
            qubit {$\ket{0}_{d_0}$} x1;
            nobit d0;
            qubit {$\ket{0}_{d_{n-1}}$} xn;
            qubit {$\ket{0}_{a_0}$} a1;
            nobit d1;
            qubit {$\ket{0}_{a_{n-1}}$} an;
            h x1;
            h xn;
            text {$\vdots$} d0;
            text {$\vdots$} d1;
            box {$\mc{O}_f$} (x1, d0, xn, d1, a1, an);
            h x1;
            h xn;
            measure a1;
            measure an;

            align x1, xn, a1, an;
            measure x1;
            measure xn;
            text {$\vdots$} d0;
            text {$\vdots$} d1;
        \end{yquant}
    \end{tikzpicture}
    \caption{Quantum circuit for solving Simon's problem with a length-$n$ hidden bitstring $b$. The structure of the oracle $\mc{O}_f$ is not visible to the player who wishes to guess $b$. The top $n$ measurement results form the bitstring $z$, and the bottom $n$ measurement results are discarded in the algorithm.}
    \label{fig:simon-circ}
\end{figure}
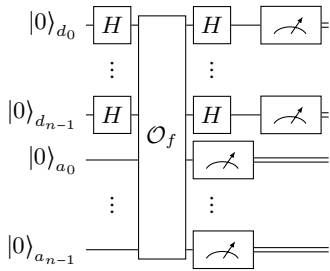

We require the $\NTS$ for the noiseless quantum \wSimon{w}{n} problem for an arbitrary $w$. More specifically, we are interested in the $\NTS$ of an optimal, ideal quantum player who runs the circuit in \cref{fig:simon-circ} until $b \in S$ is uniquely determined. We denote this quantity by $\NTSIQ$, short for ``ideal quantum.''

The exact value of $\NTSIQ$ is known in the limiting cases $w=1$ and $w=n$ \cite{PhysRevX.15.021082}. For intermediate $w$, we use the interpolation
\begin{equation}
  \NTSIQ(t) = \log_2(N_w) + \left(\frac12 + \frac{\gamma}{\ln(2)}\right) (1-t) + (\EEB - 1) t ,
  \label{eq:NTSIQ-interp}
\end{equation}
from \cite{PhysRevX.15.021082} as an accurate estimate of $\NTSIQ$,
where $t \equiv N_w / (2^n - 1)$ is the density of admissible $b$ values among all nonzero bitstrings,
$\gamma = 0.57721\dots$ is the Euler-Mascheroni constant,
and $\EEB = 1.60669\dots$ is the Erd\H{o}s--Borwein constant \cite{Erdos-Borwein}.

\subsection{NISQ algorithm}
Our NISQ player uses the circuit as on \cref{fig:simon-circ} (without knowing the contents of $\mc{O}_f$).
However, this is not the circuit executed on an actual device:
under the same rules for the compiler as in \cite{PhysRevX.15.021082},
the compiler can replace the deep oracle $\mc{O}_f$ of \cref{fig:simon-circ} by a simpler representative oracle $\mc{O}_{f_b}$. Using the freedom to relabel bit positions, we choose the canonical representative $b=0^{n-i}1^i$ for each Hamming-weight class $i=\HW(b)$. With this convention, the compiled representative circuit depends only on $n$ and $i$.
We restate the compilation rules in \cref{app:rules:oracle}.
The compilation and the representative circuits $\mc{O}_{f_b}$ are explained later in \cref{sec:o1-oracle} and \cref{app:simon-oracle}.
This simpler circuit is then executed on a noisy QPU.
This is the experimental scenario considered in this work, and the corresponding results are presented in \cref{sec:results}.
To quantify algorithmic performance via NTS, we define a practical $\NTS_Q$ that can be computed directly from experimental data as
\begin{equation}
\NTS_Q(n; w) = \frac{\langle Q \rangle}{\langle P \rangle} =\frac{N_w-1}{N_w} \frac{\sum_{i=1}^{w} h_i Q_i}{\sum_{i=1}^{w} h_i p_i - 1} ,
\label{eq:NTS-wn}
\end{equation}
where $h_i = \binom{n}{i}$ is the number of length-$n$ bitstrings with Hamming weight $i$, $Q_i$ is the number of oracle queries used for
a representative bitstring with Hamming weight $i$,
and $p_i$ is the corresponding success probability.
This expression assumes that, after compilation, the circuit performance depends only on $\HW(b)=i$,
so all weight-$i$ bitstrings are statistically equivalent.
For the NISQ player presented here, this assumption is enforced by compiling one canonical representative for each Hamming-weight class and using the compiler's qubit-relabeling freedom to treat all strings of the same Hamming weight as equivalent.

\subsection{Quantum speedup quantification}
We define an \emph{algorithmic quantum speedup}
for the $\wSimon{w}{n}$ problem as better scaling with $N_w$
of the function $\NTS_Q$ [\cref{eq:NTS-wn}]
than of the function $\NTS_C$ for the best possible classical strategy:
$\NTS_Q = o(\min_{\pi} \NTS_C(\pi))$, where the minimum is taken
over all classical strategies $\pi$.
As follows from \cref{thm:cl-lower-bound},
the sufficient condition for such speedup is scaling with $N_w$ better than $\NTSClb$
defined in \cref{eq:NTSClb-def}.
Since we do not know how exactly $\min_{\pi} \NTS_C(\pi)$ scales with $N_w$,
we use that sufficient condition in our analysis.
We determine the scaling by fitting two two-parameter models.
Both models satisfy the constraint $\NTS(N_w=1)=0$,
since when $N_w=1$ the only possible string is $b=1$, so no oracle calls are needed.

First, consider a polylogarithmic (polylog) model in $N_w$:
\begin{equation}
    \NTS_{\text{polylog}}=a(\log_2{N_w})^{\alpha} .
    \label{eq:polylog-NTS}
\end{equation}
Second, consider a polynomial (poly) model in $N_w$:
\begin{equation}
    \NTS_{\text{poly}}= b (N_w^\beta-1) .
    \label{eq:poly-NTS}
\end{equation}

Since the classical lower bound scales as $N_w^{1/2}$ and therefore follows the poly model,
an \emph{exponential} algorithmic quantum speedup is observed
when $\NTS_Q$ follows the polylog model.
If, instead, $\NTS_Q$ follows the poly model with fitted exponent $\beta_Q<\beta_C=1/2$ over the fitted range,
a \emph{polynomial} speedup in the $N_w$-based NTS metric is observed.
We refer to $\alpha$ and $\beta$ as the scaling exponents.
The fitting parameters $a$ and $b$ do not matter for scaling purposes.
We explain how we determine which model is the better fit in \cref{sec:results}.

\section{Constant-depth compilation of Simon query circuits}
\label{sec:o1-oracle}

We adopt the notation introduced in Ref.~\cite{PhysRevX.15.021082}
to represent qubits and CNOT gates,
where each node is labeled by its qubit name and a directed edge, represented using an arrow,
from qubit $A$ to qubit $B$ denotes a CNOT gate controlled on $A$ and targeting $B$.
A detailed construction of the 
circuit family underlying this constant-depth compilation is presented in
\cref{app:simon-oracle}, along with a proof 
that it implements a valid Simon oracle.

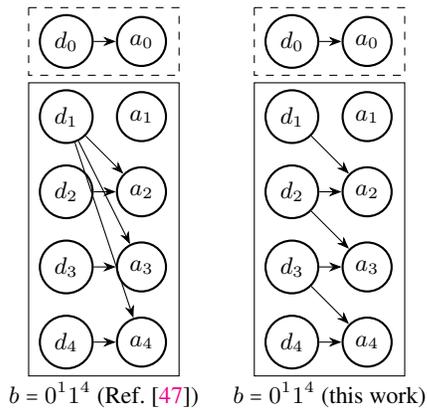
\begin{figure}[h!]
    \centering
    \begin{tikzpicture}
        \begin{scope}[every node/.style={circle,thick,draw}]
            \node (4x0) at (0,1.7) {$d_0$};
            \node (4x1) at (0,0.7) {$d_1$};
            \node (4x2) at (0,-0.3) {$d_2$};
            \node (4x3) at (0,-1.3) {$d_3$};
            \node (4x4) at (0,-2.3) {$d_4$};
            \node (4a0) at (1,1.7) {$a_0$};
            \node (4a1) at (1,0.7) {$a_1$} ;
            \node (4a2) at (1,-0.3) {$a_2$};
            \node (4a3) at (1,-1.3) {$a_3$};
            \node (4a4) at (1,-2.3) {$a_4$};

            \node (3x0) at (3,1.7) {$d_0$};
            \node (3x1) at (3,0.7) {$d_1$};
            \node (3x2) at (3,-0.3) {$d_2$};
            \node (3x3) at (3,-1.3) {$d_3$};
            \node (3x4) at (3,-2.3) {$d_4$};
            \node (3a0) at (4,1.7) {$a_0$};
            \node (3a1) at (4,0.7) {$a_1$} ;
            \node (3a2) at (4,-0.3) {$a_2$};
            \node (3a3) at (4,-1.3) {$a_3$};
            \node (3a4) at (4,-2.3) {$a_4$};
        \end{scope}
        \node (eqL) at (0.5,-3) {$b=0^11^4$ (Ref.~\cite{PhysRevX.15.021082})} ;
        \node (eqR) at (3.5,-3) {$b=0^11^4$ (this work)} ;
        \begin{scope}[>={Stealth[black]}]
            \draw[->] (4x0) -- (4a0);
            \draw[->] (4x1) -- (4a2);
            \draw[->] (4x1) -- (4a3);
            \draw[->] (4x1) -- (4a4);
            \draw[->] (4x2) -- (4a2);
            \draw[->] (4x3) -- (4a3);
            \draw[->] (4x4) -- (4a4);

            \draw[->] (3x0) -- (3a0);
            \draw[->] (3x1) -- (3a2);
            \draw[->] (3x2) -- (3a3);
            \draw[->] (3x3) -- (3a4);
            \draw[->] (3x2) -- (3a2);
            \draw[->] (3x3) -- (3a3);
            \draw[->] (3x4) -- (3a4);
        \end{scope}

        \draw[dashed] (-0.5,2.15) rectangle (1.5,1.25);
        \draw[-] (-0.5,1.15) rectangle (1.5,-2.75);

        \draw[dashed] (2.5,2.15) rectangle (4.5,1.25);
        \draw[-] (2.5,1.15) rectangle (4.5,-2.75);
    \end{tikzpicture}
    \caption{Comparison of two oracle constructions for Simon-$5$ ($b = 01111$) in the graph representation. 
CNOTs can be placed in the same entangling layer when they do not share any qubits.
    For the $\HW(b)=4$ instance shown here, the construction presented in this work 
    can be scheduled in two entangling layers,
    whereas the construction in Ref.~\cite{PhysRevX.15.021082} requires three under the same non-overlapping-qubit scheduling rule.
    }
    \label{fig:simon-compare}
\end{figure}

Our goal in this section is to illustrate the key improvement of our construction relative to that of Ref.~\cite{PhysRevX.15.021082}. This is shown in \cref{fig:simon-compare}, where both circuits implement a valid \wSimon{w}{n} oracle for $w = 4$ and $n = 5$. In both constructions, arrows denote CNOT gates, and CNOTs that do not share qubits can be executed simultaneously within a single entangling layer. In the construction of Ref.~\cite{PhysRevX.15.021082} (left), the $w-1$ arrows originating from $d_1$ cannot be executed simultaneously, causing the circuit depth to scale as $O(w)$ with the Hamming weight. In our new construction (right), the 
required CNOTs can be scheduled in two entangling layers,
reducing the overall circuit depth to $O(1)$.

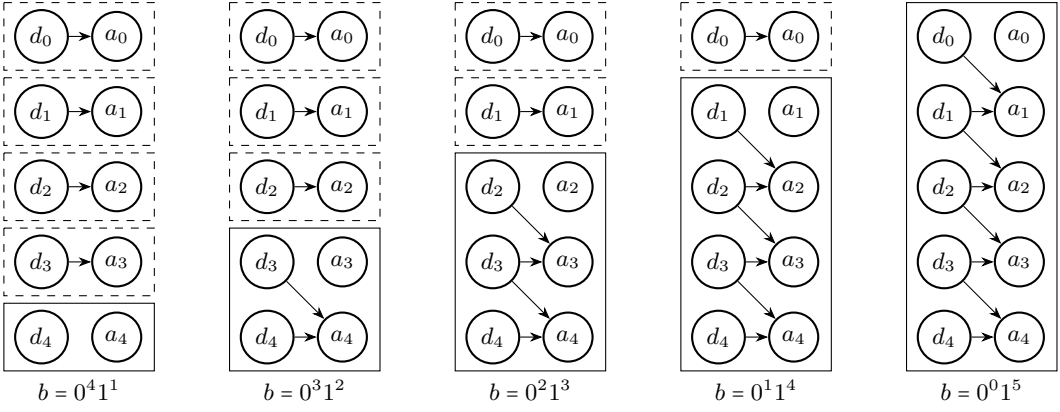
\begin{figure*}
    \centering
    \begin{tikzpicture}
        \begin{scope}[every node/.style={circle,thick,draw}]
            \node (4x0) at (0,1.7) {$d_0$};
            \node (4x1) at (0,0.7) {$d_1$};
            \node (4x2) at (0,-0.3) {$d_2$};
            \node (4x3) at (0,-1.3) {$d_3$};
            \node (4x4) at (0,-2.3) {$d_4$};
            \node (4a0) at (1,1.7) {$a_0$};
            \node (4a1) at (1,0.7) {$a_1$} ;
            \node (4a2) at (1,-0.3) {$a_2$};
            \node (4a3) at (1,-1.3) {$a_3$};
            \node (4a4) at (1,-2.3) {$a_4$};

            \node (3x0) at (3,1.7) {$d_0$};
            \node (3x1) at (3,0.7) {$d_1$};
            \node (3x2) at (3,-0.3) {$d_2$};
            \node (3x3) at (3,-1.3) {$d_3$};
            \node (3x4) at (3,-2.3) {$d_4$};
            \node (3a0) at (4,1.7) {$a_0$};
            \node (3a1) at (4,0.7) {$a_1$} ;
            \node (3a2) at (4,-0.3) {$a_2$};
            \node (3a3) at (4,-1.3) {$a_3$};
            \node (3a4) at (4,-2.3) {$a_4$};

            \node (2x0) at (6,1.7) {$d_0$};
            \node (2x1) at (6,0.7) {$d_1$};
            \node (2x2) at (6,-0.3) {$d_2$};
            \node (2x3) at (6,-1.3) {$d_3$};
            \node (2x4) at (6,-2.3) {$d_4$};
            \node (2a0) at (7,1.7) {$a_0$};
            \node (2a1) at (7,0.7) {$a_1$} ;
            \node (2a2) at (7,-0.3) {$a_2$};
            \node (2a3) at (7,-1.3) {$a_3$};
            \node (2a4) at (7,-2.3) {$a_4$};

            \node (1x0) at (9,1.7) {$d_0$};
            \node (1x1) at (9,0.7) {$d_1$};
            \node (1x2) at (9,-0.3) {$d_2$};
            \node (1x3) at (9,-1.3) {$d_3$};
            \node (1x4) at (9,-2.3) {$d_4$};
            \node (1a0) at (10,1.7) {$a_0$};
            \node (1a1) at (10,0.7) {$a_1$} ;
            \node (1a2) at (10,-0.3) {$a_2$};
            \node (1a3) at (10,-1.3) {$a_3$};
            \node (1a4) at (10,-2.3) {$a_4$};

            \node (x0) at (12,1.7) {$d_0$};
            \node (x1) at (12,0.7) {$d_1$};
            \node (x2) at (12,-0.3) {$d_2$};
            \node (x3) at (12,-1.3) {$d_3$};
            \node (x4) at (12,-2.3) {$d_4$};
            \node (a0) at (13,1.7) {$a_0$};
            \node (a1) at (13,0.7) {$a_1$} ;
            \node (a2) at (13,-0.3) {$a_2$};
            \node (a3) at (13,-1.3) {$a_3$};
            \node (a4) at (13,-2.3) {$a_4$};
        \end{scope}
        \node (eq) at (0.5,-3) {$b=0^41^1$} ;
        \node (eq) at (3.5,-3) {$b=0^31^2$} ;
        \node (eq) at (6.5,-3) {$b=0^21^3$} ;
        \node (eq) at (9.5,-3) {$b=0^11^4$} ;
        \node (eq) at (12.5,-3) {$b=0^01^5$} ;
        \begin{scope}[>={Stealth[black]}]
            \draw[->] (x0) -- (a1);
            \draw[->] (x1) -- (a2);
            \draw[->] (x2) -- (a3);
            \draw[->] (x3) -- (a4);
            \draw[->] (x1) -- (a1);
            \draw[->] (x2) -- (a2);
            \draw[->] (x3) -- (a3);
            \draw[->] (x4) -- (a4);

            \draw[->] (1x0) -- (1a0);
            \draw[->] (1x1) -- (1a2);
            \draw[->] (1x2) -- (1a3);
            \draw[->] (1x3) -- (1a4);
            \draw[->] (1x2) -- (1a2);
            \draw[->] (1x3) -- (1a3);
            \draw[->] (1x4) -- (1a4);

            \draw[->] (2x0) -- (2a0);
            \draw[->] (2x1) -- (2a1);
            \draw[->] (2x2) -- (2a3);
            \draw[->] (2x3) -- (2a4);
            \draw[->] (2x3) -- (2a3);
            \draw[->] (2x4) -- (2a4);

            \draw[->] (3x0) -- (3a0);
            \draw[->] (3x1) -- (3a1);
            \draw[->] (3x2) -- (3a2);
            \draw[->] (3x3) -- (3a4);
            \draw[->] (3x4) -- (3a4);

            \draw[->] (4x0) -- (4a0);
            \draw[->] (4x1) -- (4a1);
            \draw[->] (4x2) -- (4a2);
            \draw[->] (4x3) -- (4a3);
        \end{scope}

        \draw[dashed] (-0.5,2.15) rectangle (1.5,1.25);
        \draw[dashed] (-0.5,1.15) rectangle (1.5,0.25);
        \draw[dashed] (-0.5,0.15) rectangle (1.5,-0.75);
        \draw[dashed] (-0.5,-0.85) rectangle (1.5,-1.75);
        \draw[-] (-0.5,-1.85) rectangle (1.5,-2.75);

        \draw[dashed] (2.5,2.15) rectangle (4.5,1.25);
        \draw[dashed] (2.5,1.15) rectangle (4.5,0.25);
        \draw[dashed] (2.5,0.15) rectangle (4.5,-0.75);
        \draw[-] (2.5,-0.85) rectangle (4.5,-2.75);

        \draw[dashed] (5.5,2.15) rectangle (7.5,1.25);
        \draw[dashed] (5.5,1.15) rectangle (7.5,0.25);
        \draw[-] (5.5,0.15) rectangle (7.5,-2.75);

        \draw[dashed] (8.5,2.15) rectangle (10.5,1.25);
        \draw[-] (8.5,1.15) rectangle (10.5,-2.75);

        \draw[-] (11.5,2.15) rectangle (13.5,-2.75);
    \end{tikzpicture}
    \caption{Circuit construction underlying the compiled Simon-$5$ query circuits in the graph representation. Using the reduction procedure, we can
    trace out an appropriate subset of the leading dashed boxes to obtain Simon-$m$ for any $m$ satisfying $\HW(b)\le m<5$.
    }
    \label{fig:simon-n5-graph}
\end{figure*}

As an illustrative example, \cref{fig:simon-n5-graph} shows the circuit construction
for Simon-$5$ for hidden strings of the form $b=0^{5-i}1^i$ with $\HW(b)=i=1, 2, \dots, 5$. Each circuit uses $2n$ qubits, with $n$ data qubits $d_j$ and $n$ ancilla qubits $a_j$. The solid and dashed boxes indicate groups of qubits that share no CNOT gates across their boundaries and therefore remain unentangled with one another in the noiseless case.
Under the factorized-CPTP-map assumption of \cite{PhysRevX.15.021082} (i.e., provided crosstalk between different boxes is negligible), a
partial trace can then be applied to reduce the problem from Simon-$n$ to Simon-$m$ for any
$m$ satisfying $\HW(b)\le m < n$ in the noisy setting as well.

We conducted experiments on two IBM QPUs: a $156$-qubit device, Boston (ibm\_boston), and a $120$-qubit device, Miami (ibm\_miami).
Boston is a Heron processor and Miami is a later-generation Nighthawk processor,
with heavy-hex and square-lattice qubit connectivity graphs, respectively.
At the time of our experiments, Boston offered the lowest two-qubit gate error rates among the systems available to us. The specifications of both QPUs are summarized in \cref{app:spec}.

The compiled Simon query circuit for a problem of size $n$
uses $2n$ qubits and can be embedded on a linear nearest-neighbor layout;
in the case $w=n$, this requires a chain of length $2n-1$ plus one isolated ancilla qubit $a_0$ (see \cref{fig:simon-n5-graph}).
This linear structure plays a central role in the compilation. On a square-grid architecture,
a linear chain can be straightforwardly embedded by mapping each qubit along a path that traverses the grid,
allowing the full $120$-qubit chip of Miami to be utilized, giving $n_{\max} = 60$.
Boston's connectivity graph, on the other hand, contains a linear chain of length 129, but this chain does not span the entire chip.
Consequently, on Boston we have $n_{\max} = 65$, where $129$ qubits are arranged into a linear chain for the case $w = n$,
with one additional qubit ($a_0$) assigned independently (see \cref{fig:simon-n5-graph};
the maximum required linear chain length is $2 \min(n,w)-1$ for $w \geq 2$).

After identifying the linear chain on each QPU, we compiled the circuit onto the corresponding region of the chip using an as-late-as-possible (ALAP) scheduling strategy, which initializes each qubit immediately before its first operation. Since the pre-compiled circuit maps directly onto a linear chain, the resulting compiled circuits require no additional SWAP gates to satisfy the connectivity constraints. This also implies that the circuits are very shallow, leaving insufficient idle time for error suppression techniques such as dynamical decoupling (DD), which was used extensively to enhance performance in \cite{PhysRevX.15.021082}. An example of \wSimon{2}{3} compiled on Boston is shown in \cref{fig:timeline}.

\begin{figure}[h!]
    \centering
    \includegraphics[width=0.5\textwidth]{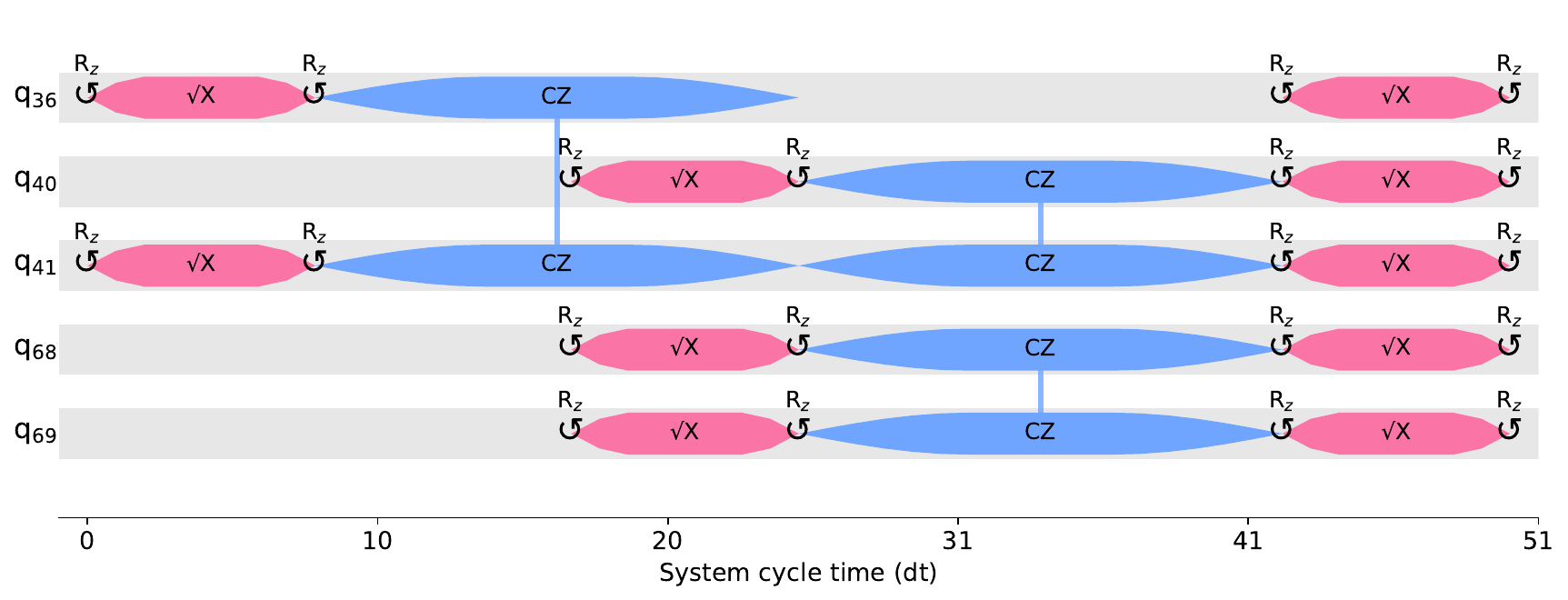}
    \caption{Circuit for \wSimon{2}{3} compiled on Boston using the ALAP scheduling strategy. The native gates shown are $R_z$ (circular arrows), $\sqrt{X}$ (pink), and CZ (blue). Note that the only idle period on $q_{36}$ is insufficient to accommodate even the shortest two-pulse DD sequence.}
    \label{fig:timeline}
\end{figure}

\section{Results and discussion}\label{sec:results}
We present two sets of results: the restricted-HW version \wSimon{w}{n} and the original Simon's problem \wSimon{n}{n}.

\begin{figure}[h!]
\centering
\includegraphics[width=0.5\textwidth]{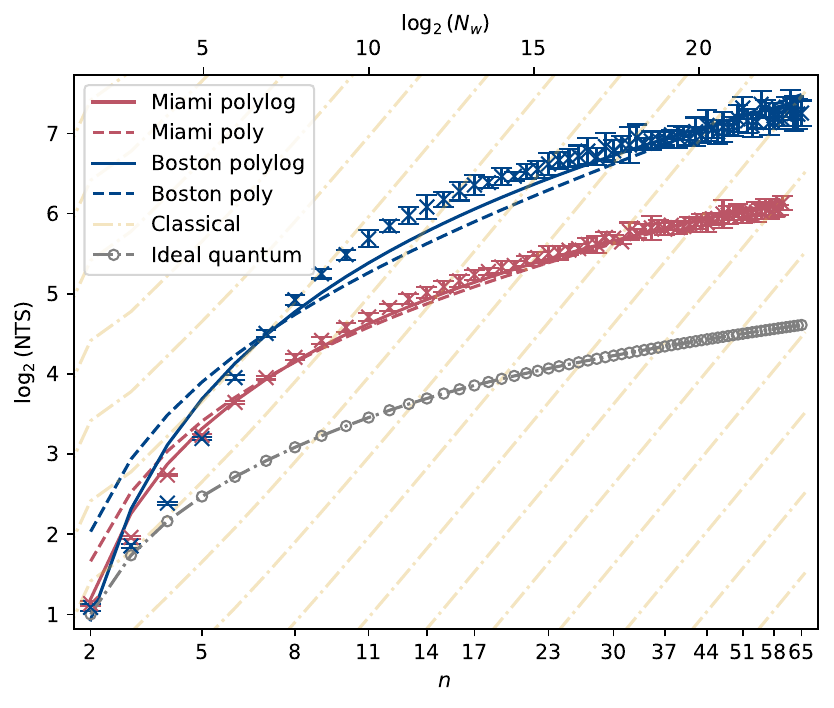}
\caption{$\log_2(\NTS_Q)$ as a function of $n$ (bottom axis) for the restricted-HW Simon problem with cutoff $w=5$ on Boston (blue, up to $n=65$) and Miami (red, up to $n=60$); the top axis shows $\log_2(N_w)$. The solid lines denote fits using the polylog model [\cref{eq:polylog-NTS}], and the dashed lines denote fits using the poly model [\cref{eq:poly-NTS}]. The error bars show bootstrapped $1\sigma$ uncertainties. The yellow dash-dot lines represent the scaling of $\log_2(\NTSClb)$ and also serve as a visual guide for the large-$n$ scaling of the poly model. The gray dashed line is $\log_2$ of the interpolated noiseless-quantum estimate given by \cref{eq:NTSIQ-interp}.}
\label{fig:main-plot-nmax}
\end{figure}

Our first result is shown in \cref{fig:main-plot-nmax}, which plots $\log_2(\NTS_Q)$ as a function of problem size $n$ for the restricted-HW Simon problem with cutoff $w=5$ for the Hamming weight,
using data up to $n=65$ on Boston and up to $n=60$ on Miami.
Note that for $n<w$ (in this case, for $n < 5$), the cutoff for the Hamming weight is effectively $w=n$ because there are no bitstrings of length $n$ with Hamming weight greater than $n$.
We used bootstrapping to compute all means and standard deviations, as documented in \cref{app:bootstrap}.
As explained in \cref{sec:o1-oracle}, for each device we extracted all $m<n$ values from the largest restricted-HW circuits in this data set,
namely those implemented at $n=65$ on Boston and $n=60$ on Miami.

\begin{figure}
    \centering
    \includegraphics[width=0.5\textwidth]{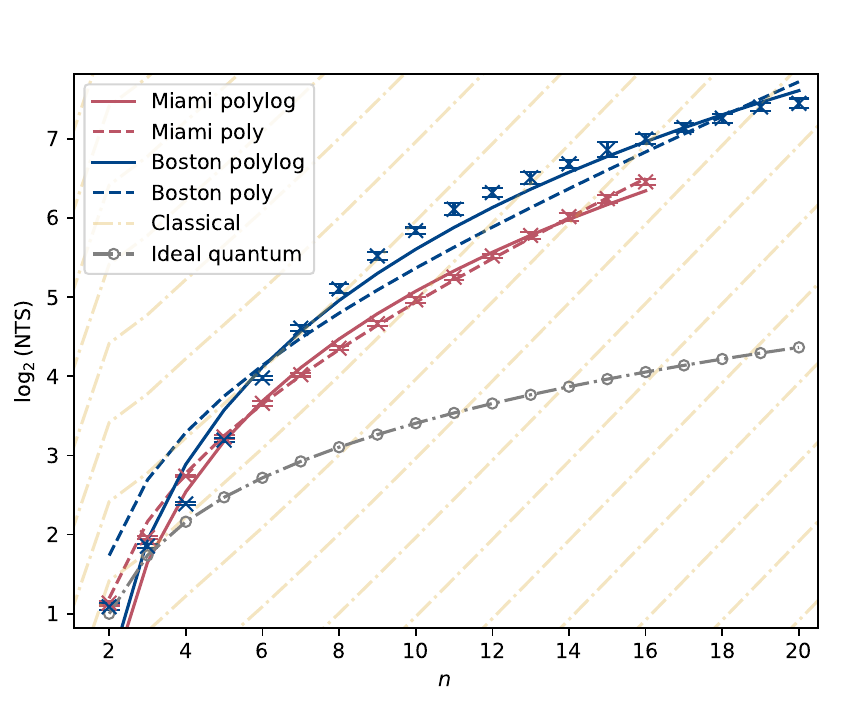}
    \caption{$\log_2(\NTS_Q)$ as a function of $n$ on Boston (blue) and Miami (red) for the original \wSimon{n}{n}, up to $n=20$ and $n=16$, respectively. The line styles and reference curves are the same as in \cref{fig:main-plot-nmax}.}
    \label{fig:main-plot-original}
\end{figure}

Our second result is shown in \cref{fig:main-plot-original},
which plots $\log_2(\NTS_Q)$ as a function of problem size $n$ for the largest contiguous unrestricted-Simon data set used in the scaling analysis,
extending up to $n=16$ on Miami and $n=20$ on Boston. An isolated usable \wSimon{18}{18} point on Miami is shown in \cref{fig:summary-range},
but because the $n=17$ unrestricted-Simon instance did not yield a usable $\NTS_Q$ value,
this point is not included in the unrestricted-Simon scaling fit.
While the unrestricted data do not extend as far as in the restricted-HW case,
this is because the classical post-processing becomes much more memory intensive when $w=n$ and the candidate set size grows as $2^n-1$;
moreover, the convergence time to a guess of $b$ exceeds the 48-hour HPC time limit imposed in our analysis for sufficiently large values of $n$.
While we expect that information set decoding algorithms
like BJMM (\cite{becker2012decoding}, \cite{narisada2024solving})
can be adapted to overcome this challenge,
this was not done in this work.
Even so, the range of Hamming weights shown exceeds that of Ref.~\cite{PhysRevX.15.021082}.

To determine the nature of the quantum speedup,
we fit the $\NTS_Q$ data to both the polylog and poly models.
Since the classical lower bound is described by the poly model, a polylog fit to $\NTS_Q$ indicates an exponential speedup over the classical benchmark.
If the poly model instead provides a better fit, the speedup is characterized as polynomial, with the degree of advantage determined by comparing the scaling exponents $\beta_Q$ and $\beta_C$;
a polynomial speedup requires $\beta_Q < \beta_C$.
\Cref{fig:main-plot-original,fig:main-plot-nmax} suggest visually that the polylog model provides a better fit than the poly model on Boston.
The distinction is less clear on Miami.

For the restricted-HW version we apply this scaling analysis to 36 experimental curves: one for each device (Miami or Boston) and $w \in [1, w_{\mathrm{max}}]$,
where $w_{\mathrm{max}} = 16$ and $w_{\mathrm{max}} = 20$ for Miami and Boston, respectively.
Each of these curves is the set of points $(n, \NTS_Q(n;w))$, where $\NTS_Q(n;w)$ is computed using \cref{eq:NTS-wn} from experimental data.
Scaling plots for $w \in [2,16]$, presented in the same format as the $w = 5$ curve in \cref{fig:main-plot-nmax}, are provided in \cref{fig:grid-plot} in \cref{app:stat-analysis}.

For the unrestricted Simon problem, we have one experimental curve for each device, with points $(n,\NTS_Q(n;n))$ and endpoint $n_{\rm end}=16$ on Miami and $n_{\rm end}=20$ on Boston,  shown in \cref{fig:main-plot-original}. 

In addition to fitting the full available contiguous range, we perform an endpoint-stability analysis in which the same unrestricted-Simon curve is refitted on the initial intervals $n=2,\dots,n_{\rm end}$, with $n_{\rm end}=7,\dots,16$ on Miami and $n_{\rm end}=7,\dots,20$ on Boston. This analysis tests whether the preferred scaling model and fitted exponent remain stable as larger unrestricted-Simon instances are included. In figures below where these unrestricted-Simon endpoint fits are plotted on the same horizontal coordinate as the restricted-HW fits, such as \cref{fig:R2-AIC,fig:poly-miami,fig:R2-AIC-qctrl}, the horizontal coordinate for the unrestricted curves is $n_{\rm end}$, while it is the Hamming-weight cutoff $w$ for the restricted-HW curves.

We start the endpoint-stability analysis at $n_{\rm end}=7$ because both the poly and polylog models have two free parameters. The fit over $n=2,\dots,n_{\rm end}$ contains $n_{\rm end}-1$ data points and hence $n_{\rm end}-3$ residual degrees of freedom. At $n_{\rm end}=6$ this leaves only three residual degrees of freedom, which is too few to reliably discriminate between two smooth two-parameter curves.

To quantitatively assess which model better describes $\NTS_Q$, we employ two fit diagnostics:
the coefficient of determination $R^2$ and the Akaike Information Criterion (AIC)~\cite{1100705}.
Because many of the data points at different $n$ are obtained by marginalizing or partially tracing data from the same largest circuits, the residuals are correlated. We therefore use $R^2$ and AIC as heuristic model-selection diagnostics rather than as fully covariance-aware hypothesis tests. The $R^2$ statistic, bounded between $0$ and $1$, measures the proportion of variance in the data explained by the model via the ratio of the residual sum of squares to the total sum of squares, with $R^2 = 1$ corresponding to a perfect fit. The AIC is derived from a likelihood model and incorporates a penalty for model complexity. Although the absolute value of the AIC is not directly interpretable, a lower AIC relative to a competing model indicates a better fit within the assumed fitting model.

\begin{figure*}
  \centering
  \includegraphics[width=0.48\textwidth]{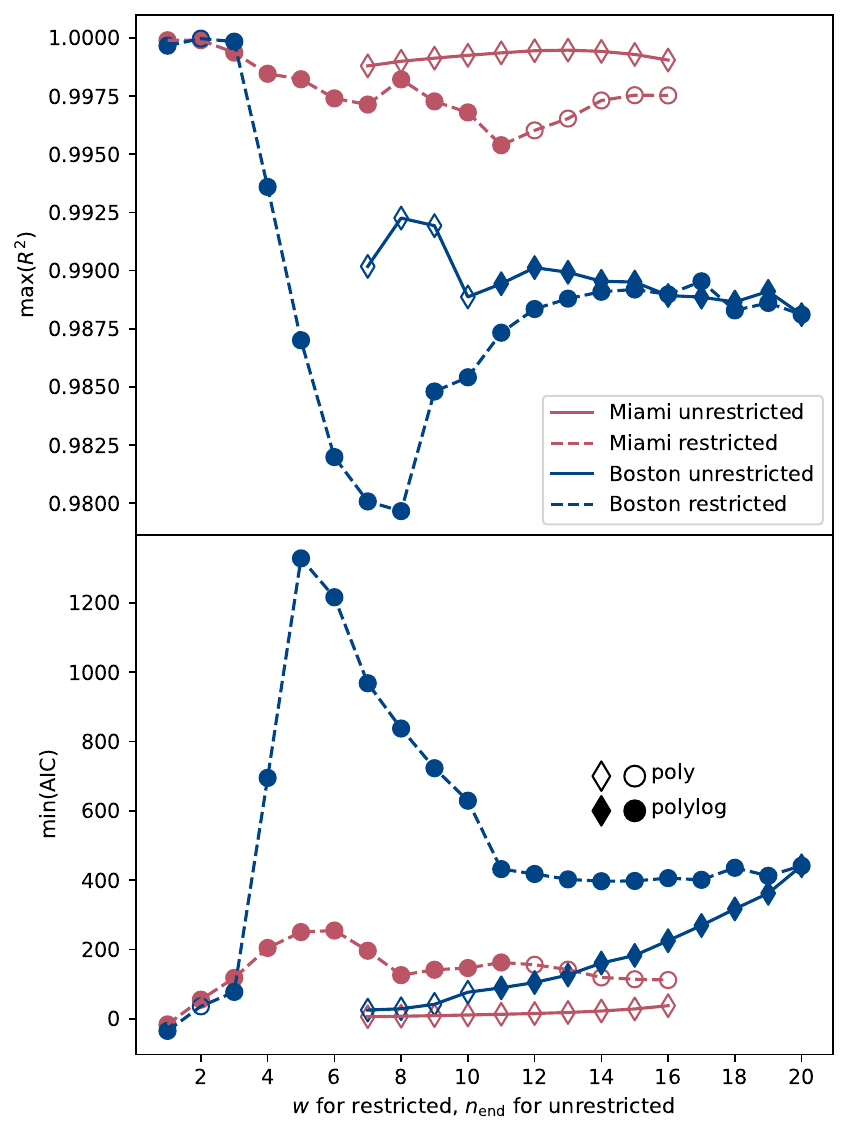}
  \includegraphics[width=0.47\textwidth]{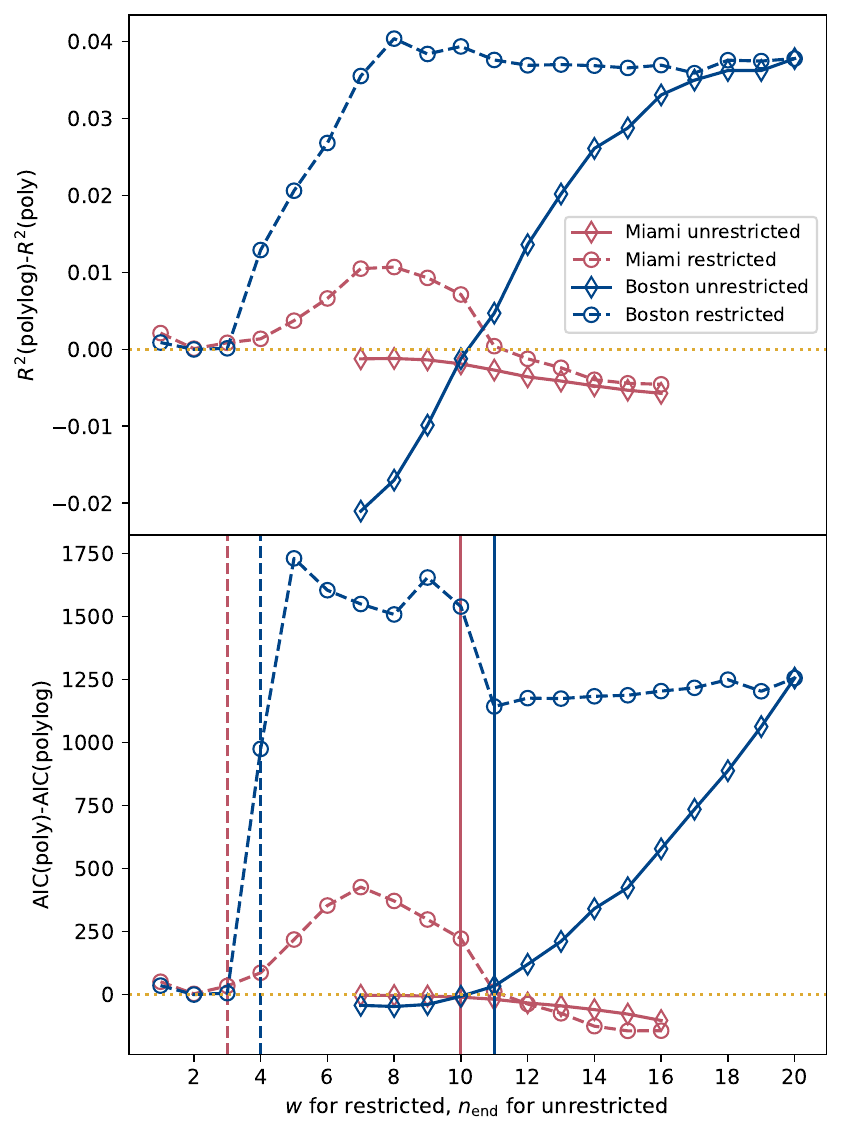}
  \caption{
Test of whether the polylog or poly model is the better fit for $\NTS_Q$.
  For both Boston (blue) and Miami (red), solid lines with diamond markers denote unrestricted Simon fits for \wSimon{n}{n}; for these curves, the horizontal coordinate is the largest problem size $n_{\rm end}$ included in the fit, and each point is obtained by fitting the unrestricted data over the  range $n=2,\dots,n_{\rm end}$.
  Dashed lines with circle markers denote restricted-HW fits for \wSimon{w}{n}; for these curves, the horizontal coordinate is the Hamming-weight cutoff $w$.
  For the restricted-HW version, $w\in[1,w_{\max}]$, with $w_{\max}=16$ on Miami and $w_{\max}=20$ on Boston; each point is fitted for $n\in[2,n_{\max}]$, where $n_{\max}$ is the largest problem size for which the bar height in \cref{fig:summary-range} is at least $w$.
  For the unrestricted version, $n_{\rm end}\in[7,16]$ on Miami and $n_{\rm end}\in[7,20]$ on Boston; the lower endpoint choice $n_{\rm end}=7$ is discussed in the main text.
  Top left: at each displayed coordinate value, the marker indicates the model with the highest $R^2$ value.
  Bottom left: at each displayed coordinate value, the marker indicates the model with the lowest AIC value.
  The $R^2$ and AIC criteria are in complete agreement across all displayed coordinate values, as confirmed by the consistent filled and open markers between the top and bottom left panels.
  Top right: difference of the $R^2$ values of the polylog and poly models.
  Bottom right: difference of the AIC values of the poly and polylog models.
  The differences are defined such that a value above zero favors the polylog model, while a value below zero favors the poly model.
  Vertical lines indicate the thresholds at which $|\Delta \text{AIC}| \ge 10$ for the strongly favored model-preference regimes summarized in the main text.
}
    \label{fig:R2-AIC}
\end{figure*}

The left panels of \cref{fig:R2-AIC} plot the $R^2$ and AIC diagnostics over the displayed fit-coordinate range;
filled markers denote the polylog model and open markers denote the poly model.
For restricted-HW fits, the fit coordinate is the Hamming-weight cutoff $w$.
For unrestricted Simon fits, the fit coordinate is the largest problem size $n_{\rm end}$ included in the fit, where each point is obtained by fitting the unrestricted data over the contiguous range $n=2,\dots,n_{\rm end}$.
The right panels of \cref{fig:R2-AIC} plot the pairwise differences at each displayed fit-coordinate value; the differences are defined such that a value above zero favors the polylog model and a value below zero favors the poly model.
The magnitude of the differences also gives a sense of how decisively one model outperforms the other.
To ensure a reliable distinction between the two models, we adopt the rule of thumb that a model is strongly favored only when $|\Delta \text{AIC}| \ge 10$; see \cref{app:AIC}.
We use this threshold only to identify regimes in which one model is strongly favored.
For restricted-HW fits, we denote the corresponding threshold by $w_c$ when the preferred model is stable beyond that threshold.
For unrestricted Simon fits, we denote the corresponding threshold by $n_{\rm end,c}$.
For Miami restricted data, the preferred model changes again at larger $w$, so we report the two strongly favored regimes separately rather than assigning a single stable-after threshold.
The vertical lines in \cref{fig:R2-AIC} indicate the thresholds of the strongly favored regimes summarized in \cref{tab:models}.

As shown in \cref{fig:R2-AIC}, the $R^2$ and AIC criteria are in complete agreement across all displayed fit-coordinate values.
On Boston, they favor the polylog model for restricted-HW fits starting at $w_c=4$ and for unrestricted Simon fits starting at $n_{\rm end,c}=11$.
On Miami, for the restricted-HW case, they favor the polylog model from $w=3$ up to $w=11$, beyond which the poly model is a better fit,
while for the unrestricted case, they favor the poly model for $n_{\rm end}\ge n_{\rm end,c}=10$.
We provide additional statistical analysis in \cref{app:stat-analysis}.
The resulting preferred models for $\NTS_Q$ on both machines are summarized in \cref{tab:models}.

\begin{table}[h!]
    \centering
    \begin{tabular}{c|c|c|c}
        \hline
        \hline
        & Type & polylog (exp. speedup) & poly (poly speedup)\\
        \hline
        \multirow{2}{*}{Boston} & restricted & $w\in[4,20]$ & - \\
        \cline{2-4}
& unrestricted & $n_{\rm end}\in[11,20]$ & - \\
        \hline
        \multirow{2}{*}{Miami} & restricted & $w\in[3,11]$ & $w\in[12,16]$ \\
        \cline{2-4}
        & unrestricted & - & $n_{\rm end}\in[10,16]$ \\
                \hline
        \hline
    \end{tabular}
    \caption{Summary of 
    the preferred scaling models for $\NTS_Q$ and the associated speedups. For restricted-HW rows, the reported ranges are Hamming-weight cutoffs $w$; for unrestricted rows, the reported ranges are the endpoints $n_{\rm end}$.}
    \label{tab:models}
\end{table}

\begin{figure}[h!]
    \centering
    \includegraphics[width=0.49\textwidth]{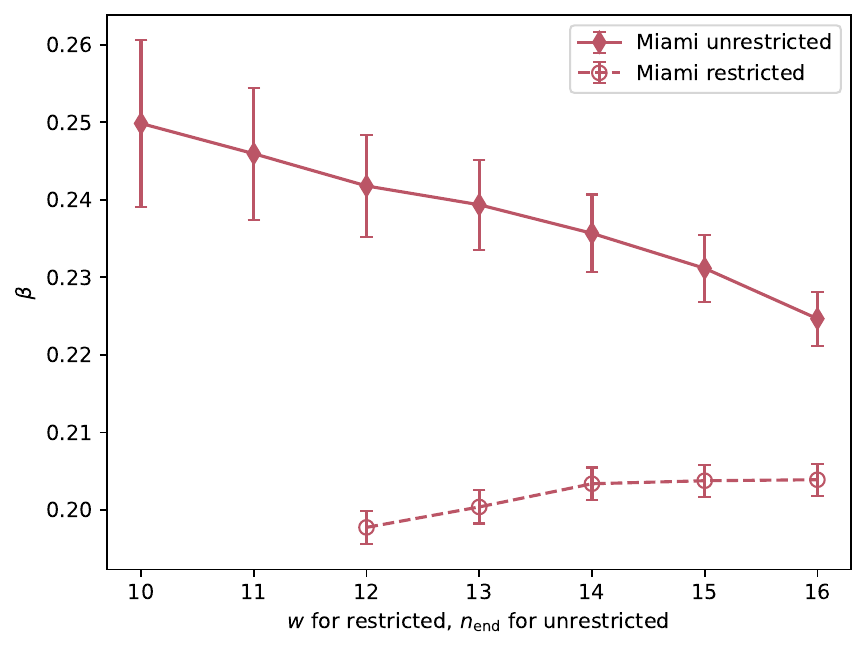}
    \caption{
The fitted scaling exponent $\beta$ of the poly model on Miami. For the unrestricted Simon problem (solid diamonds), the horizontal coordinate is the endpoint $n_{\rm end}$, shown for $n_{\rm end}\in[10,16]$. For the restricted-HW problem (circles), the horizontal coordinate is the Hamming-weight cutoff $w$, shown for $w\in[12,16]$. In both cases, the fitted quantum exponent satisfies $\beta_Q<\beta_C=1/2$ throughout the displayed range, indicating polynomial quantum speedup wherever the poly model is the preferred fit. Error bars represent one standard deviation ($1\sigma$) of the fitted parameter obtained from bootstrapped data.
    }
    \label{fig:poly-miami}
\end{figure}

Whenever the poly model is the preferred fit on Miami, one must compare its exponent with the classical value to determine whether the scaling corresponds to a speedup or a slowdown.
\Cref{fig:poly-miami} shows the fitted exponent $\beta$ over the ranges where the poly model is strongly favored ($|\Delta \text{AIC}| \ge 10$):
$n_{\rm end}\in[10,16]$ for the unrestricted problem and $w\in[12,16]$ for the restricted-HW problem.
In both cases, the classical exponent $\beta_C=1/2$ exceeds the quantum exponent $\beta_Q$, confirming a polynomial quantum speedup in those ranges, as indicated in the rightmost column of \cref{tab:models}.

For the restricted-HW data, the largest problem sizes analyzed here are $n=65$ on Boston and $n=60$ on Miami, corresponding to $130$ and $120$ qubits, respectively, before any further marginalization to smaller reduced instances. In both cases the main restricted-HW scaling plot uses cutoff $w=5$.
For the unrestricted Simon's problem, the largest reduced circuits for which $\NTS_Q$ could be evaluated involve $36$ qubits at $w=n=18$ on Miami and $40$ qubits at $w=n=20$ on Boston, as illustrated in \cref{fig:summary-range}.
For Miami, however, the unrestricted scaling analysis uses the largest contiguous range of available unrestricted-Simon points, which extends only to $n=16$, because the $n=17$ unrestricted-Simon instance did not yield a usable $\NTS_Q$ value.
The previously reported algorithmic quantum speedup for the restricted-HW Simon problem was demonstrated on $127$-qubit Sherbrooke and Brisbane QPUs~\cite{PhysRevX.15.021082}.
The present results extend that demonstration to newer QPUs, including the $156$-qubit Boston device and the $120$-qubit Miami device, broaden the Hamming-weight range over which an exponential speedup is observed, and show that the observed scaling advantage extends to the original Simon problem \wSimon{n}{n} over the finite problem-size range studied here.
The speedup trend is promising, and we expect it to persist at larger problem sizes; however, we do not present results all the way up to $n_{\max}$ because our classical post-processing algorithm (see \cref{app:algo-nisq}) becomes memory limited.

\begin{figure*}[t]
    \centering
    \includegraphics[width=\textwidth]{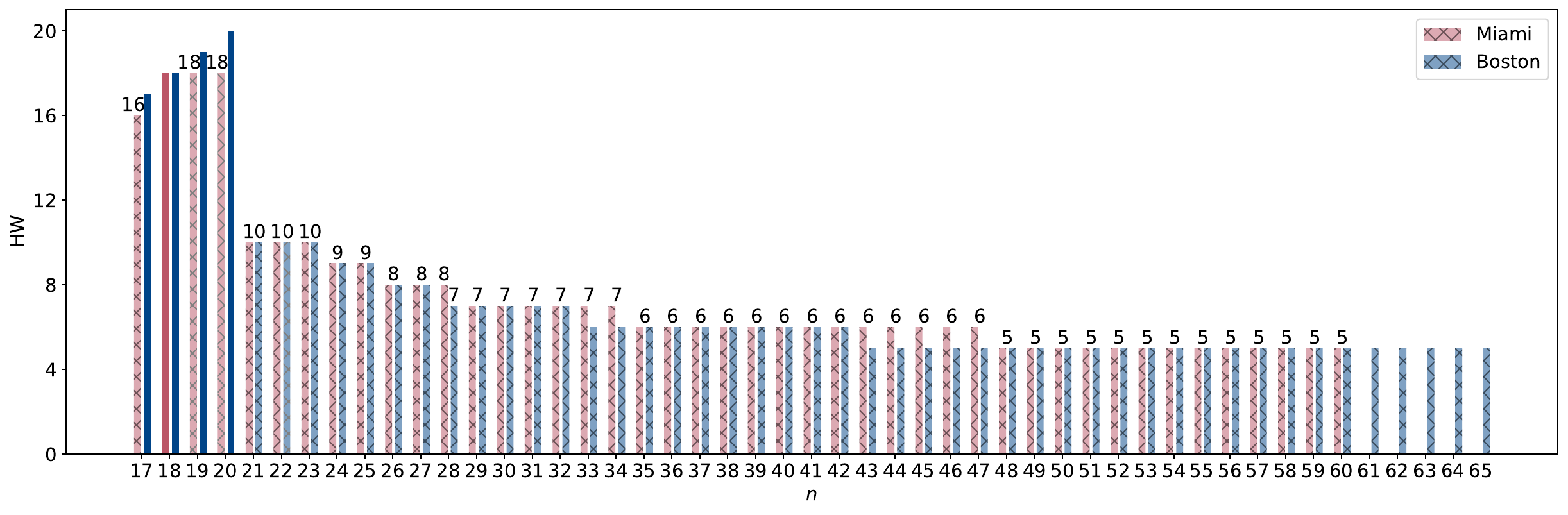}
\caption{Availability range of the NTS analysis as a function of problem size. For each problem size $n$ shown, the bar height gives the largest Hamming-weight cutoff $w_{\max}^{\rm avail}(n)$ for which $\NTS_Q$ could be evaluated on Boston (blue) and Miami (red). A solid bar reaching $w_{\max}^{\rm avail}(n)=n$ corresponds to the original Simon problem \wSimon{n}{n}; a cross-hatched bar with $w_{\max}^{\rm avail}(n)<n$ indicates that only restricted-HW instances \wSimon{w}{n} were available at that problem size. For example, on Miami the $n=17$ unrestricted-Simon instance did not yield a usable $\NTS_Q$ value, whereas the \wSimon{18}{18} instance did. Consequently, the Miami unrestricted-Simon scaling analysis uses the largest contiguous available range, ending at $n=16$, even though an isolated unrestricted-Simon point is available at $n=18$. Conversely, for a fixed cutoff $w$ included in the restricted-HW scaling analysis, the endpoint $n_{\max}(w)$ is chosen from the largest contiguous range of problem sizes for which the corresponding device's bar height is at least $w$. Missing or truncated bars occur when the NTS computation did not yield a usable value, either because the classical post-processing exceeded the 48-hour limit or because the experimental fidelity was too low for the post-processing algorithm we are using (see $\hat{f}(i) \leq 0.5$ condition in \cref{app:algo-nisq}).}
    \label{fig:summary-range}
\end{figure*}

\section{Conclusions}\label{sec:conclusion}
Demonstrating a quantum speedup that scales favorably with problem size is central to establishing the practical utility of quantum computers.
Simon's problem,
an early instance of the Abelian hidden subgroup problem and a conceptual precursor to Shor's factoring algorithm,
provides a natural setting for this pursuit:
it requires an exponential number of oracle queries classically
yet only a linear number of oracle queries on a noiseless quantum computer.
Thus, in the oracle-query model, Simon's problem exhibits a theoretical quantum speedup when the resources required to implement the oracle are not accounted for.

A restricted-HW version of Simon's problem,
in which the hidden bitstring is restricted by $\HW(b)\le w\le n$,
was studied in Ref.~\cite{PhysRevX.15.021082},
where it was demonstrated that polylogarithmic scaling of $\NTS_Q$ in $N_w$, which corresponds to an exponential speedup in the $N_w$-based NTS metric used in that work, is achievable on NISQ devices when $w$ is fixed as $n$ grows.
For fixed $w$, the classical lower bound scales as $n^{w/2}$;
in the unrestricted case $w=n$,
the classical query complexity is instead $\Omega(2^{n/2})$.

In this work, we revisited this problem with a new circuit compilation scheme
that achieves constant circuit depth.
We performed Simon's algorithm experiments on the IBM Quantum platform
and demonstrated that the $156$-qubit Boston device exhibits exponential quantum speedup
over the Hamming-weight range for which we could reliably distinguish
between poly and polylog scaling,
while the $120$-qubit Miami device similarly exhibits exponential speedup
at low-to-intermediate Hamming weights and polynomial speedup at higher Hamming weights.
Moreover, over the finite range of unrestricted-HW instances for which our NTS computation was feasible, the observed scaling advantage also appears for the original Simon problem with $w=n$, and is therefore not limited to the small-$w$ regime.
The observed performance is consistent with the reduced depth
and routing overhead of the new 
compiled-circuit construction,
although comparisons with \cite{PhysRevX.15.021082}
should be interpreted with some care because
the present experiments were performed on newer hardware.

These results significantly extend experimental demonstrations
of algorithmic quantum speedup in the oracle model.
In particular, relative to our previous work \cite{PhysRevX.15.021082}, the main advances are
  the broader Hamming-weight range over which exponential speedup is observed,
  the observation of polynomial speedup on Miami when exponential scaling is no longer favored,
  and the extension of the experiments to a regime where the original Simon problem \wSimon{n}{n} is recovered for the problem sizes studied.
Over the problem sizes and Hamming-weight ranges studied here, we therefore observe either exponential or polynomial quantum speedup. More generally, our work adds to the experimental evidence from previous demonstrations \cite{pokharel2022demonstration,PhysRevX.15.021082} that algorithmic quantum speedup for structured, nontrivial oracle problems is becoming experimentally accessible on present-day superconducting hardware. Our constant-depth compilation scheme for Simon query circuits is also of independent interest and suggests that further hardware-aware depth reductions may be possible in other quantum algorithms, including beyond the oracle setting.

\section{Acknowledgments}
This research was conducted using IBM Quantum Systems provided through USC's IBM Quantum Innovation Center.
This research is based upon work supported by, or in part by, the U.S. Army Research
Laboratory and the U.S. Army Research Office under
Contract/Grant No. W911NF2310255, and by the Office of Naval Research under Contract/Grant No. N0001-4-26-12092.

\appendix

\section{Compiler}
\label{app:rules:oracle}
To establish a well-defined criterion for quantum speedup, and to avoid ambiguities such as how to account for post-processing resources or how much circuit optimization is permitted, we adopt the compiler framework of \cite{PhysRevX.15.021082}, which performs the following functions:
\begin{enumerate}
\renewcommand{\labelenumi}{(\roman{enumi})}
    \item it hides the implementation details of $f$ from the player;
    \item it takes a circuit $C$ with zero or more boxes labeled ``$\mcO$'' and produces
a circuit $C'$ obtained from $C$ by replacing each $\mcO$ box with
a circuit implementing it for the current $f$;
\item it further compiles $C'$ to ensure that it is compatible with the
gate set and connectivity of the NISQ device,
to reduce the number of gates and the circuit depth,
and to select the best layout of the qubits on the QPU
(e.g., avoiding the noisiest qubits and couplings),
yielding a new circuit $C''$ and classical post-processing instructions;
\item it sends $C''$ to the NISQ device for execution,
obtains the result, performs the post-processing, and returns the final result to the player.
\end{enumerate}
Note that in (ii), ``$\mcO$'' is just a box labeling a place to insert the oracle
(which is unknown to the player),
as opposed to $\mcO_f$, which is the unitary implementing the actual oracle.
The specific rules for the compilation (iii) were carefully chosen in \cite{PhysRevX.15.021082}
to ensure that we give enough freedom to the compiler,
so that it is able to remove the exponential depth
present in the original oracle implementation ($\mcO_f$),
while we do not give it so much freedom that it is able to get rid of all
quantum resources or of all errors appearing from running the circuit on a NISQ device.
We follow the same rules (1)--(4) described in \cite[Appendix A]{PhysRevX.15.021082}
for the step (iii) above.
Another freedom the compiler has is to pick a mapping between
the abstract qubits in the circuit and the physical qubits on the QPU.

\section{Construction underlying the constant-depth compilation}
\label{app:simon-oracle}
For a given $b$ we need to construct a circuit $\mcO_f$
that implements a 2-to-1 function $f$ satisfying the Simon's problem condition
with the given hidden string $b$.
Specifically,
\begin{equation}
  \mcO_f \ket{x}\ket{a} = \ket{x} \ket{a \oplus f(x)}
  \label{eq:mcOf-def}
\end{equation}
for all $x\in\{0,1\}^n$ and $a\in\{0,1\}^n$.
In order to ensure that the assumptions of \cref{thm:cl-lower-bound} establishing the classical
lower bound hold, it is important that $f$ is picked uniformly at random from
the set of all such functions.
This necessarily implies that 
a generic circuit for $\mcO_f$
has size that is, on average, exponential in $n$.
As explained in \cite[Appendix A]{PhysRevX.15.021082},
any such function $f$ can be written in the form $f = f' \circ f_b$,
where $f_b$ is any fixed 2-to-1 function satisfying
the Simon's problem condition for the given $b$,
and $f'$ is a random permutation on $\{0,1\}^n$.
Moreover, when the Simon's problem circuit, which our NISQ player uses
(as opposed to some other circuit a NISQ player can consider),
is given to the compiler, the conditions (1)--(4) for the compiler
allow it to defer $f'$ to the postprocessing of the ancilla qubits.
While the compiler needs to perform $f'$ in postprocessing every time
it receives the measurement results from the quantum computer,
our NISQ player does not actually use the measurement results of the ancilla qubits,
hence the result of this postprocessing is immediately discarded by the NISQ player.
Therefore, in order to measure the performance of the NISQ player,
we do not need to implement $f'$ at all.

It remains to pick $f_b$ and implement $\mcO_{f_b}$.
As discussed, we can pick any $f_b$ satisfying the 2-to-1 condition for the given $b$.
That is, $\forall x,y \in \{0,1\}^n$, $f_b(x) = f_b(y)$
if and only if $x = y$ or $x = y \oplus b$
for a hidden bitstring $b\in \{0,1\}^n \setminus \{0^n\}$.
Due to the symmetry of the Simon's problem circuit and the freedom of choosing the mapping
between the abstract qubits and the physical qubits, it is sufficient to
consider a single $b$ for each Hamming weight $i = \HW(b)$
(since other values of $b$ can be obtained by relabeling the bit positions).
From each such Hamming-weight class we can pick the canonical representative $b=0^{n-i}1^i$.
In the experiments, this canonical representative is used for each Hamming-weight class.
The  circuit family presented below and the $O(\HW(b))$-depth construction
presented in \cite[Appendix B]{PhysRevX.15.021082}
both satisfy this condition.

To aid in visualization, we will interchangeably use quantum circuits and directed graphs to represent the oracle. \cref{fig:circ-graph} shows an example of how to transform between the two representations. The data ($d_j$) and ancilla ($a_j$) qubits are drawn in the first and second columns in the graph representation, respectively.

\begin{figure}[h!]
    \centering
    \begin{tikzpicture}
        \begin{yquantgroup}
            \registers{
                qubit {$\ket{d_0}$} x0;
                qubit {$\ket{d_1}$} x1;
                qubit {$\ket{a_0}$} a0;
                qubit {$\ket{a_1}$} a1;
            }
            \circuit{
                cnot a0 | x0;
                cnot a0 | x1;
                cnot a1 | x1;
            }
            \equals[$\Leftrightarrow$]
        \end{yquantgroup}
        \begin{scope}[every node/.style={circle,thick,draw}]
            \node (x0) at (3.6,-0.25) {$d_0$};
            \node (x1) at (3.6,-1.25) {$d_1$};
            \node (a0) at (4.6,-0.25) {$a_0$};
            \node (a1) at (4.6,-1.25) {$a_1$};
        \end{scope}
        \begin{scope}[>={Stealth[black]}]
            \draw[->] (x0) -- (a0);
            \draw[->] (x1) -- (a0);
            \draw[->] (x1) -- (a1);
        \end{scope}
    \end{tikzpicture}
    \caption{Conversion between a pure-CNOT circuit and its graph representation. A CNOT gate is an arrow from the control qubit to the target qubit.}
    \label{fig:circ-graph}
\end{figure}
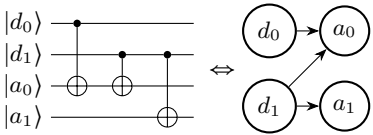

Next, we show how to construct a 2-to-1 oracle.
We represent a length-$n$ bitstring $x=x_0\cdots x_{n-1}$, where $x_j\in\{0,1\}$.
The oracle transforms $\ket{x}\ket{a}\rightarrow \ket{x}\ket{a \oplus f_b(x)}$,
where, for $b=0^{n-i}1^i$,
the function $f_b\colon \{0,1\}^n\to \{0,1\}^n$ is defined componentwise by:
\begin{equation}
[f_b(x)]_j=
\begin{cases}
x_j, & 0\le j\le n-i-1,\\
0, & j=n-i,\\
x_{j-1}\oplus x_j, & n-i+1\le j\le n-1,
\end{cases}
    \label{eq:fbx}
\end{equation}
for $1\le i\le n$.

Equivalently, we can write this transformation as
\begin{multline}
  \ket{x_0\cdots x_{n-1}}_d\ket{a_0\cdots a_{n-1}}_a \mapsto \\
  \ket{x_0\cdots x_{n-1}}_d\ket{[a \oplus f_b(x)]_0\cdots [a \oplus f_b(x)]_{n-1}}_a,
    \label{eq:fbx-state}
\end{multline}
where $[f_b(x)]_j$ is defined in \cref{eq:fbx}.

The following lemma shows that $f_b$ satisfies the 2-to-1 Simon's problem condition.

\begin{mylemma}
\label{lem:2to1}
$f_b(x)=f_b(y)$ if and only if $x= y$ or $y =  x \oplus b$, where $b=0^{n-i}1^{i}$, $1\le i\le n$. That is, $f_b$ is a 2-to-1 function.
\end{mylemma}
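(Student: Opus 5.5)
Since $f_b$ is linear over $\mathbb{F}_2$ (each component in \cref{eq:fbx} is a XOR of input bits or the constant $0$), we have $f_b(x)\oplus f_b(y)=f_b(x\oplus y)$. The statement is therefore equivalent to the claim that the kernel of $f_b$ is exactly $\{0^n,b\}$. Once that is shown, $f_b(x)=f_b(y)$ holds iff $x\oplus y\in\{0^n,b\}$, i.e.\ iff $y=x$ or $y=x\oplus b$. Since $b\neq 0^n$ (because $i\ge 1$), every fiber then has exactly two elements, so $f_b$ is 2-to-1.

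The first step is to check that $b=0^{n-i}1^i$ lies in the kernel. For $0\le j\le n-i-1$ we have $b_j=0$. The component $j=n-i$ is identically $0$. For $n-i+1\le j\le n-1$, both $b_{j-1}$ and $b_j$ lie in the block of ones, since $j-1\ge n-i$, so $b_{j-1}\oplus b_j=0$. Hence $f_b(b)=0^n$, and trivially $f_b(0^n)=0^n$.

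The converse is the only step with real content. Suppose $f_b(z)=0^n$. The first $n-i$ components force $z_j=0$ for all $j\le n-i-1$. For $n-i+1\le j\le n-1$ the conditions $z_{j-1}\oplus z_j=0$ chain together to give $z_{n-i}=z_{n-i+1}=\cdots=z_{n-1}$. Note that the index $n-i$ itself is unconstrained, since that output bit is $0$. Hence $z\in\{0^n,\,0^{n-i}1^i\}=\{0^n,b\}$.

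I would treat the edge cases $i=1$ (the chain is empty, and only $z_{n-1}$ is free) and $i=n$ (the first block is empty, and all bits are equal) explicitly, to confirm that the index ranges still work.
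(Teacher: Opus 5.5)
Your proof is correct and takes essentially the same approach as the paper's. The paper works directly with $\delta_j = x_j \oplus y_j$ and checks the converse by substituting $y = x \oplus b$, rather than invoking linearity and the kernel explicitly, but the chain argument $\delta_{n-i} = \cdots = \delta_{n-1}$ is the same as yours.
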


\begin{proof}
Let $\delta_j:=x_j\oplus y_j$. From $f_b(x)=f_b(y)$ and \cref{eq:fbx}, we obtain
\beq
x_j=y_j,\qquad j=0,\dots,n-i-1,
\eeq
hence $\delta_j=0$ for $j=0,\dots,n-i-1$, and
\beq
x_{j-1}\oplus x_j = y_{j-1}\oplus y_j,\qquad j=n-i+1,\dots,n-1.
\eeq
The latter implies $\delta_{j-1}=\delta_j$ for $j=n-i+1,\dots,n-1$, so
\beq
\delta_{n-i}=\delta_{n-i+1}=\cdots=\delta_{n-1}\in\{0,1\}.
\eeq
Therefore either all these bits are $0$, in which case $x=y$, or all these bits are $1$, in which case
\beq
y=x\oplus 0^{n-i}1^i=x\oplus b.
\eeq

Conversely, if $x=y$ then trivially $f_b(x)=f_b(y)$. If $y=x\oplus b$, then $y_j=x_j$ for $j<n-i$, the output bit with index $n-i$ is $0$ for both inputs, and for $j=n-i+1,\dots,n-1$,
\beq
y_{j-1}\oplus y_j=(x_{j-1}\oplus 1)\oplus(x_j\oplus 1)=x_{j-1}\oplus x_j.
\eeq
Hence $f_b(x)=f_b(y)$.

Thus $f_b(x)=f_b(y)$ if and only if $x=y$ or $y=x\oplus b$, so $f_b$ is a $2$-to-$1$ Simon oracle.
\end{proof}

Having shown that $f_b$ as defined in \cref{eq:fbx} and written explicitly in \cref{eq:fbx-state}
is a 2-to-1 function satisfying the Simon's problem condition for $b=0^{n-i}1^i$,
we proceed to construct $\mcO_{f_b}$ using CNOT gates.
To construct the first $n-i$ positions in \cref{eq:fbx-state}, one needs to apply the map $a_j\mapsto a_j\oplus d_j$ for $j=0,\dots,n-i-1$.
The quantum operation for this is a CNOT
from each qubit $d_j$ of the data register (control qubit)
to the corresponding qubit $a_j$ in the ancilla register (target qubit).
The qubit $a_{n-i}$ is unchanged by the oracle.
The remaining $i-1$ trailing positions are implemented by pairs of CNOTs
acting on $a_j$ (target qubit) and controlled by $d_{j-1}$ and $d_j$ for $j=n-i+1,\dots,n-1$.
\Cref{fig:combine} shows $\mcO_{f_b}$ for Simon-3 with $b=011$.
\Cref{fig:simon-n5-graph} shows $\mcO_{f_b}$ for Simon-5 in the graph representation for all $5$ possible values of $i = \HW(b)$.

The CNOT gates in this construction can always be scheduled in two entangling layers. One valid schedule is as follows. The first layer contains the CNOTs $d_j\to a_j$ for $0\le j\le n-i-1$ together with the CNOTs $d_{j-1}\to a_j$ for $n-i+1\le j\le n-1$. The second layer contains the CNOTs $d_j\to a_j$ for $n-i+1\le j\le n-1$. Within each layer, no data qubit and no ancilla qubit appears in more than one CNOT. Therefore, under the non-overlapping-qubit scheduling rule used throughout this work, the oracle $\mcO_{f_b}$ has two entangling layers for every $1\le i\le n$. The surrounding Hadamard, initialization, measurement, and native-basis decompositions add only constant depth, so the compiled Simon query circuit has $O(1)$ quantum depth.

\begin{figure}[t]
    \centering
    \begin{tikzpicture}
       \begin{yquantgroup}[operator/separation=0.8mm]
            \registers{
                qubit {} x0;
                qubit {} x1;
                qubit {} x2;
                qubit {} a0;
                qubit {} a1;
                qubit {} a2;
            }
            \circuit{
                init {$\ket{d_0}$} x0;
                init {$\ket{d_1}$} x1;
                init {$\ket{d_2}$} x2;
                init {$\ket{a_0}$} a0;
                init {$\ket{a_1}$} a1;
                init {$\ket{a_2}$} a2;
                cnot a0 | x0;
            }
            \equals[$+$]
            \circuit{
                cnot a2 | x1;
                cnot a2 | x2;
            }
            \equals
            \circuit{
                cnot a0 | x0;
                cnot a2 | x1;
                cnot a2 | x2;
            }
            \equals
        \end{yquantgroup}
        \begin{scope}[every node/.style={circle,thick,draw}]
            \node (x0) at (6.9,-0.25) {$d_0$};
            \node (x1) at (6.9,-1.25) {$d_1$};
            \node (x2) at (6.9,-2.25) {$d_2$};
            \node (a0) at (7.9,-0.25) {$a_0$};
            \node (a1) at (7.9,-1.25) {$a_1$};
            \node (a2) at (7.9,-2.25) {$a_2$};
        \end{scope}
        \begin{scope}[>={Stealth[black]}]
            \draw[->] (x0) -- (a0);
            \draw[->] (x1) -- (a2);
            \draw[->] (x2) -- (a2);
        \end{scope}
        \draw[dashed] (6.45,0.17) rectangle (8.35,-0.70);
        \draw[-] (6.45,-0.81) rectangle (8.35,-2.67);
    \end{tikzpicture}
\caption{Construction of $\mcO_{f_b}$ for Simon-$3$ problem with $b=011$ according to \cref{eq:fbx-state}.
Here ``$+$'' on the left-hand side indicates
the composition of two parts of the implementation:
the first part is responsible for the first $n-i$ ancilla qubits ($0$s in $b$),
shown using the dashed box on the right-hand side, and is implemented using one CNOT per ancilla;
the second part is responsible for the last $i-1$ ancilla qubits
(the $1$s in $b$),
shown using the solid box on the right-hand side,
and is implemented using two CNOTs per ancilla.
Qubit $a_{n-i}$ is idle ($a_1$ in this example).}
    \label{fig:combine}
\end{figure}
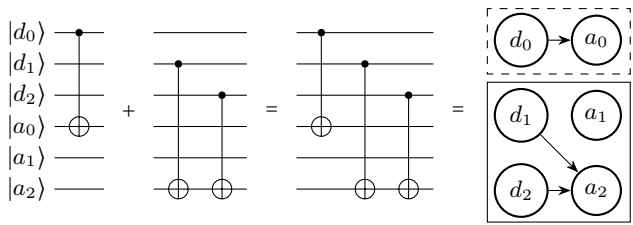

\section{QPU specifications}\label{app:spec}
We performed our demonstrations using the $156$-qubit Boston device (\texttt{ibm\_boston}) on January 15, 2026, and the $120$-qubit Miami device (\texttt{ibm\_miami}) on January 16, 2026. \cref{tab:device-spec} summarizes the QPU specifications on the respective days of the experiments, 
reporting the minimum, mean, and maximum over the qubits and couplers used. This information is extracted from the \texttt{BackendProperties} object of the QPU recorded at the time the experiment was executed.
A total of $15,000$ shots were collected for each demonstration on each QPU.

\begin{table}[h!]
    \centering
   \begin{tabular}{l|ccc|ccc}
\hline
\hline
 & \multicolumn{3}{c|}{Boston} & \multicolumn{3}{c}{Miami} \\
 \hline
 & Min & Mean & Max & Min & Mean & Max \\
\hline
$T_1(\mu s)$ & 12.43 & 267.8 & 393.6 & 14.73 & 336.4 & 562.3 \\
$T_2(\mu s)$ & 20.37 & 325.9 & 643.4 & 27.48 & 269.7 & 616.9 \\
RO err (\%) & 0.1099 & 0.9117 & 19.29 & 0.4639 & 2.649 & 34.4 \\
RO dur (ns) & 2200 & 2200 & 2200 & 2400 & 2400 & 2400 \\
1QG err (\%) & 0.005857 & 1.557 & 100 & 0.006412 & 0.03182 & 0.5476 \\
1QG dur (ns) & 32 & 32 & 32 & 32 & 32 & 32 \\
2QG err (\%) & 0.06606 & 0.465 & 14.24 & 0.1024 & 0.7062 & 15.23 \\
2QG dur (ns) & 68 & 69.29 & 108 & 68 & 148.5 & 508 \\
\hline
\hline
\end{tabular}
    \caption{QPU specifications for Boston (January 15, 2026) and Miami (January 16, 2026). Here, 1QG and 2QG denote single-qubit and two-qubit gates, respectively, and RO denotes readout, with err and dur referring to error rate and duration, respectively.
These data were obtained from the \texttt{service.job(job\_id).properties()} \texttt{BackendProperties} object via Qiskit.
    The Min, Mean, and Max columns are computed over the qubits and couplers used in the experiments.
  }
    \label{tab:device-spec}
\end{table}

On the Boston device, the maximum reported single-qubit gate error is $100\%$, arising from the calibrated $\sqrt{X}$ error entries for qubits $85$ and $146$ at the time of the experiment. We retained these qubits in the main analysis because excluding them would restrict the accessible restricted-HW data to $w=1,2,3$ and would reduce $n_{\max}$ by two. The NTS results reported in the main text are computed from the observed circuit outputs, but the presence of these calibration outliers should be kept in mind when interpreting the Boston device-specification table.

\section{Bootstrapping}\label{app:bootstrap}
We estimated uncertainties via $15,000$ bootstrap resamples of the NTS data. The available data were partitioned into 10 folds, and each bootstrap sample was formed by resampling 9 of the 10 folds with replacement. From the resulting bootstrap distribution, we extracted the mean and standard deviation of each plotted NTS value. In plots such as \cref{fig:main-plot-nmax}, the mean is shown as the data point and the error bars indicate $\pm 1\sigma$ from the bootstrap distribution.

These bootstrap means and standard deviations were then used in Mathematica to fit the parameters $\alpha$ and $\beta$, with the bootstrap standard deviation taken as the uncertainty of each data point.

\section{Statistical comparison of
the poly and polylog models}\label{app:stat-analysis}

In the main text, we focused on the $R^2$ and AIC as statistical measures for deciding between the polylog and poly models. Because the data for different $n$ values are extracted from the same $n_{\max}$ circuits, the resulting points are correlated across $n$; accordingly, these measures should be interpreted as heuristic model-selection tools rather than as fully covariance-aware hypothesis tests. 
This section presents a more comprehensive statistical analysis comparing the performance of the poly and polylog models across both QPUs and fit-coordinate values. For restricted-HW data, the fit coordinate is the Hamming-weight cutoff $w$. For unrestricted data, the fit coordinate is the endpoint $n_{\rm end}$ of the fit over $n=2,\dots,n_{\rm end}$.  

The quantities reported are the $p$-value and $t$-statistic of the fitted scaling parameter, together with the fit diagnostics Akaike Information Criterion (AIC), corrected AIC (AICc), Bayesian Information Criterion (BIC), coefficient of determination ($R^2$), and adjusted $R^2$ ($\mathrm{Adj} R^2$), all reported for the fit-coordinate values shown in \cref{fig:R2-AIC} without restricting the parameter range.
These quantities are computed using Mathematica's \texttt{NonlinearModelFit}, with confidence intervals obtained via bootstrapping as described in \cref{app:bootstrap}. We use AIC, AICc, BIC, $R^2$, and adjusted $R^2$ as model-comparison diagnostics. The $p$-value and $t$-statistic are reported only as supplementary indicators of parameter significance and should not be interpreted as independent model-selection criteria. The preferred direction is indicated adjacent to each model label: ($\Uparrow$) denotes that higher values indicate a better fit, and ($\Downarrow$) denotes that lower values indicate a better fit. Taken together, they provide a comprehensive statistical assessment of the model fits.

\begin{table*}[t]
\centering
\begin{tabular}{l|cc|cc|cc|cc|cc|cc|cc}
\hline
 $n_{\rm end}$& \multicolumn{2}{c|}{Adj $R^2$ ($\Uparrow$)} & \multicolumn{2}{c|}{$R^2$ ($\Uparrow$)} & \multicolumn{2}{c|}{AIC ($\Downarrow$)} & \multicolumn{2}{c|}{AICc ($\Downarrow$)} & \multicolumn{2}{c|}{BIC ($\Downarrow$)} & \multicolumn{2}{c|}{$t$-statistic ($\Uparrow$)} & \multicolumn{2}{c}{$p$-value ($\Downarrow$)} \\
 \hline
 & Polylog & Poly & Polylog & Poly & Polylog & Poly & Polylog & Poly & Polylog & Poly & Polylog & Poly & Polylog & Poly \\
\hline
7 & 0.9692 & \textbf{0.9902} & 0.9794 & \textbf{0.9935} & 68.23 & \textbf{25.19} & 80.23 & \textbf{37.19} & 67.6 & \textbf{24.57} & \textbf{31.84} & 23.98 & \textbf{5.802e-06} & 1.794e-05 \\
8 & 0.9753 & \textbf{0.9923} & 0.9823 & \textbf{0.9945} & 76.64 & \textbf{28.97} & 84.64 & \textbf{36.97} & 76.48 & \textbf{28.81} & \textbf{40.23} & 31.56 & \textbf{1.788e-07} & 5.995e-07 \\
9 & 0.9821 & \textbf{0.9919} & 0.9866 & \textbf{0.9939} & 81.46 & \textbf{41.88} & 87.46 & \textbf{47.88} & 81.7 & \textbf{42.12} & \textbf{52.64} & 41.96 & \textbf{3.155e-09} & 1.225e-08 \\
10 & 0.9876 & \textbf{0.9889} & 0.9904 & \textbf{0.9913} & 83.82 & \textbf{76.67} & 88.62 & \textbf{81.47} & 84.41 & \textbf{77.26} & \textbf{66.28} & 52.24 & \textbf{4.678e-11} & 2.466e-10 \\
11 & \textbf{0.9894} & 0.9847 & \textbf{0.9915} & 0.9878 & \textbf{89.39} & 122.6 & \textbf{93.39} & 126.6 & \textbf{90.29} & 123.5 & \textbf{73.69} & 57.61 & \textbf{1.281e-12} & 9.156e-12 \\
12 & \textbf{0.9901} & 0.9765 & \textbf{0.9919} & 0.9808 & \textbf{104.7} & 224.4 & \textbf{108.1} & 227.8 & \textbf{105.9} & 225.6 & \textbf{84.07} & 64.33 & \textbf{2.416e-14} & 2.677e-13 \\
13 & \textbf{0.9899} & 0.9697 & \textbf{0.9916} & 0.9748 & \textbf{126.4} & 337 & \textbf{129.4} & 340 & \textbf{127.8} & 338.5 & \textbf{92.81} & 69.5 & \textbf{5.164e-16} & 9.271e-15 \\
14 & \textbf{0.9895} & 0.9634 & \textbf{0.9912} & 0.9691 & \textbf{160.8} & 501.5 & \textbf{163.5} & 504.1 & \textbf{162.5} & 503.2 & \textbf{105} & 76.33 & \textbf{7.339e-18} & 2.428e-16 \\
15 & \textbf{0.9895} & 0.9607 & \textbf{0.991} & 0.9664 & \textbf{182.8} & 606.6 & \textbf{185.2} & 609 & \textbf{184.8} & 608.5 & \textbf{111.7} & 80.26 & \textbf{1.767e-19} & 9.324e-18 \\
16 & \textbf{0.9889} & 0.9559 & \textbf{0.9904} & 0.9618 & \textbf{225.3} & 803.3 & \textbf{227.5} & 805.5 & \textbf{227.5} & 805.4 & \textbf{122.4} & 85.99 & \textbf{2.702e-21} & 2.659e-19 \\
17 & \textbf{0.9889} & 0.9539 & \textbf{0.9903} & 0.9596 & \textbf{269.3} & 1004 & \textbf{271.3} & 1006 & \textbf{271.6} & 1007 & \textbf{133.9} & 92.1 & \textbf{3.695e-23} & 6.913e-21 \\
18 & \textbf{0.9887} & 0.9524 & \textbf{0.99} & 0.958 & \textbf{317.1} & 1205 & \textbf{318.9} & 1207 & \textbf{319.6} & 1208 & \textbf{143.6} & 96.94 & \textbf{5.883e-25} & 2.113e-22 \\
19 & \textbf{0.9891} & 0.9529 & \textbf{0.9903} & 0.9581 & \textbf{362} & 1425 & \textbf{363.7} & 1427 & \textbf{364.7} & 1428 & \textbf{155} & 103.1 & \textbf{7.555e-27} & 5.09e-24 \\
20 & \textbf{0.9881} & 0.9503 & \textbf{0.9894} & 0.9556 & \textbf{441.5} & 1697 & \textbf{443.1} & 1699 & \textbf{444.3} & 1700 & \textbf{163.2} & 106.2 & \textbf{1.318e-28} & 1.938e-25 \\
\hline
\end{tabular}
\caption{Statistical measures for Boston (unrestricted endpoint fits). Boldface indicates the favored model. For AIC, AICc, and BIC, smaller values indicate a better fit; for $R^2$ and adjusted $R^2$, larger values indicate a better fit. The $p$-value and $t$-statistic columns are reported as supplementary diagnostics for the fitted parameter and are not used as primary model-selection criteria.}
\label{tab:boston-stat}
\end{table*}

\begin{table*}[t]
\centering
\begin{tabular}{l|cc|cc|cc|cc|cc|cc|cc}
\hline
 $w$& \multicolumn{2}{c|}{Adj $R^2$ ($\Uparrow$)} & \multicolumn{2}{c|}{$R^2$ ($\Uparrow$)} & \multicolumn{2}{c|}{AIC ($\Downarrow$)} & \multicolumn{2}{c|}{AICc ($\Downarrow$)} & \multicolumn{2}{c|}{BIC ($\Downarrow$)} & \multicolumn{2}{c|}{$t$-statistic ($\Uparrow$)} & \multicolumn{2}{c}{$p$-value ($\Downarrow$)} \\
 \hline
 & Polylog & Poly & Polylog & Poly & Polylog & Poly & Polylog & Poly & Polylog & Poly & Polylog & Poly & Polylog & Poly \\
\hline
1 & \textbf{0.9997} & 0.9988 & \textbf{0.9997} & 0.9988 & \textbf{-34.82} & 0.4555 & \textbf{-34.42} & 0.8555 & \textbf{-28.34} & 6.932 & 33.6 & \textbf{202.5} & 1.69e-41 & \textbf{3.537e-89} \\
2 & 1 & \textbf{1} & 1 & \textbf{1} & 36.24 & \textbf{35.94} & 36.64 & \textbf{36.34} & 42.72 & \textbf{42.42} & \textbf{52.59} & 1.71 & \textbf{3.796e-53} & 0.09227 \\
3 & \textbf{0.9998} & 0.9997 & \textbf{0.9998} & 0.9997 & \textbf{77.57} & 82.85 & \textbf{77.97} & 83.25 & \textbf{84.05} & 89.33 & \textbf{69.43} & 4.987 & \textbf{1.668e-60} & 5.234e-06 \\
4 & \textbf{0.9936} & 0.9807 & \textbf{0.9938} & 0.9813 & \textbf{694.9} & 1670 & \textbf{695.3} & 1670 & \textbf{701.4} & 1676 & \textbf{154.9} & 82.49 & \textbf{5.637e-82} & 4.283e-65 \\
5 & \textbf{0.987} & 0.9664 & \textbf{0.9874} & 0.9674 & \textbf{1328} & 3059 & \textbf{1329} & 3059 & \textbf{1335} & 3065 & \textbf{182.5} & 89.55 & \textbf{2.241e-86} & 2.742e-67 \\
6 & \textbf{0.982} & 0.9552 & \textbf{0.9829} & 0.9573 & \textbf{1216} & 2820 & \textbf{1217} & 2821 & \textbf{1221} & 2826 & \textbf{176.1} & 89.67 & \textbf{3.437e-58} & 8.643e-47 \\
7 & \textbf{0.9801} & 0.9446 & \textbf{0.9814} & 0.9481 & \textbf{968} & 2518 & \textbf{968.9} & 2518 & \textbf{972.3} & 2522 & \textbf{172.5} & 93.08 & \textbf{3.156e-45} & 1.799e-37 \\
8 & \textbf{0.9797} & 0.9393 & \textbf{0.9812} & 0.9439 & \textbf{837.4} & 2345 & \textbf{838.5} & 2347 & \textbf{841.2} & 2349 & \textbf{165.4} & 94.7 & \textbf{3.318e-38} & 2.109e-32 \\
9 & \textbf{0.9848} & 0.9464 & \textbf{0.9861} & 0.9509 & \textbf{723} & 2378 & \textbf{724.2} & 2379 & \textbf{726.5} & 2381 & \textbf{177.2} & 106.1 & \textbf{3.319e-36} & 2.629e-31 \\
10 & \textbf{0.9854} & 0.946 & \textbf{0.9867} & 0.951 & \textbf{629.1} & 2168 & \textbf{630.5} & 2170 & \textbf{632.4} & 2172 & \textbf{172.3} & 106.4 & \textbf{3.38e-33} & 5.096e-29 \\
11 & \textbf{0.9873} & 0.9497 & \textbf{0.9887} & 0.955 & \textbf{431.9} & 1575 & \textbf{433.5} & 1576 & \textbf{434.7} & 1578 & \textbf{156.3} & 101.7 & \textbf{2.761e-28} & 4.096e-25 \\
12 & \textbf{0.9884} & 0.9514 & \textbf{0.9896} & 0.9565 & \textbf{417.8} & 1594 & \textbf{419.4} & 1595 & \textbf{420.7} & 1597 & \textbf{159.6} & 104.4 & \textbf{1.933e-28} & 2.598e-25 \\
13 & \textbf{0.9888} & 0.9518 & \textbf{0.99} & 0.9569 & \textbf{402.3} & 1576 & \textbf{403.9} & 1578 & \textbf{405.1} & 1579 & \textbf{160.1} & 105.3 & \textbf{1.821e-28} & 2.244e-25 \\
14 & \textbf{0.9891} & 0.9522 & \textbf{0.9902} & 0.9573 & \textbf{396.9} & 1580 & \textbf{398.5} & 1582 & \textbf{399.7} & 1583 & \textbf{161.3} & 106.3 & \textbf{1.605e-28} & 1.904e-25 \\
15 & \textbf{0.9892} & 0.9526 & \textbf{0.9903} & 0.9576 & \textbf{397.7} & 1585 & \textbf{399.3} & 1587 & \textbf{400.5} & 1588 & \textbf{161.8} & 106.6 & \textbf{1.523e-28} & 1.823e-25 \\
16 & \textbf{0.989} & 0.952 & \textbf{0.9901} & 0.9571 & \textbf{405.8} & 1610 & \textbf{407.4} & 1611 & \textbf{408.6} & 1613 & \textbf{162} & 106.5 & \textbf{1.504e-28} & 1.867e-25 \\
17 & \textbf{0.9895} & 0.9536 & \textbf{0.9906} & 0.9585 & \textbf{400.9} & 1618 & \textbf{402.5} & 1619 & \textbf{403.7} & 1621 & \textbf{164.2} & 108.1 & \textbf{1.196e-28} & 1.451e-25 \\
18 & \textbf{0.9883} & 0.9507 & \textbf{0.9895} & 0.9559 & \textbf{435.6} & 1685 & \textbf{437.2} & 1687 & \textbf{438.5} & 1688 & \textbf{163.2} & 106.4 & \textbf{1.32e-28} & 1.894e-25 \\
19 & \textbf{0.9886} & 0.9511 & \textbf{0.9898} & 0.9563 & \textbf{412.5} & 1616 & \textbf{414.1} & 1617 & \textbf{415.3} & 1619 & \textbf{161.1} & 105.7 & \textbf{1.653e-28} & 2.106e-25 \\
20 & \textbf{0.9881} & 0.9503 & \textbf{0.9894} & 0.9556 & \textbf{441.5} & 1697 & \textbf{443.1} & 1699 & \textbf{444.3} & 1700 & \textbf{163.2} & 106.2 & \textbf{1.318e-28} & 1.938e-25 \\
\hline
\end{tabular}
    \caption{As in \cref{tab:boston-stat}, for Boston (restricted).}
    \label{tab:boston-stat-rst}
\end{table*}

\begin{table*}[t]
\centering
\begin{tabular}{l|cc|cc|cc|cc|cc|cc|cc}
\hline
 $n_{\rm end}$& \multicolumn{2}{c|}{Adj $R^2$ ($\Uparrow$)} & \multicolumn{2}{c|}{$R^2$ ($\Uparrow$)} & \multicolumn{2}{c|}{AIC ($\Downarrow$)} & \multicolumn{2}{c|}{AICc ($\Downarrow$)} & \multicolumn{2}{c|}{BIC ($\Downarrow$)} & \multicolumn{2}{c|}{$t$-statistic ($\Uparrow$)} & \multicolumn{2}{c}{$p$-value ($\Downarrow$)} \\
 \hline
 & Polylog & Poly & Polylog & Poly & Polylog & Poly & Polylog & Poly & Polylog & Poly & Polylog & Poly & Polylog & Poly \\
\hline
7 & 0.9976 & \textbf{0.9988} & 0.9984 & \textbf{0.9992} & 8.961 & \textbf{6.037} & 20.96 & \textbf{18.04} & 8.337 & \textbf{5.412} & \textbf{29.43} & 11.37 & \textbf{7.935e-06} & 0.0003413 \\
8 & 0.9978 & \textbf{0.999} & 0.9984 & \textbf{0.9993} & 11.29 & \textbf{7.403} & 19.29 & \textbf{15.4} & 11.13 & \textbf{7.241} & \textbf{36.23} & 14.78 & \textbf{3.014e-07} & 2.566e-05 \\
9 & 0.9978 & \textbf{0.9991} & 0.9983 & \textbf{0.9994} & 15.1 & \textbf{9.095} & 21.1 & \textbf{15.1} & 15.34 & \textbf{9.334} & \textbf{43.68} & 18.88 & \textbf{9.636e-09} & 1.426e-06 \\
10 & 0.9974 & \textbf{0.9993} & 0.998 & \textbf{0.9994} & 21.51 & \textbf{10.98} & 26.31 & \textbf{15.78} & 22.1 & \textbf{11.58} & \textbf{50.74} & 23.27 & \textbf{3.023e-10} & 6.875e-08 \\
11 & 0.9967 & \textbf{0.9994} & 0.9973 & \textbf{0.9995} & 32.08 & \textbf{13.04} & 36.08 & \textbf{17.04} & 32.98 & \textbf{13.95} & \textbf{58.97} & 28.85 & \textbf{7.591e-12} & 2.252e-09 \\
12 & 0.9959 & \textbf{0.9995} & 0.9966 & \textbf{0.9996} & 48.69 & \textbf{15.28} & 52.12 & \textbf{18.71} & 49.89 & \textbf{16.47} & \textbf{69.71} & 36.48 & \textbf{1.299e-13} & 4.325e-11 \\
13 & 0.9954 & \textbf{0.9995} & 0.9961 & \textbf{0.9996} & 63.19 & \textbf{18.29} & 66.19 & \textbf{21.29} & 64.64 & \textbf{19.75} & \textbf{75.88} & 41.26 & \textbf{3.86e-15} & 1.675e-12 \\
14 & 0.9947 & \textbf{0.9994} & 0.9955 & \textbf{0.9995} & 82.51 & \textbf{22.51} & 85.17 & \textbf{25.18} & 84.2 & \textbf{24.21} & \textbf{82.86} & 47.08 & \textbf{9.852e-17} & 4.867e-14 \\
15 & 0.994 & \textbf{0.9993} & 0.9948 & \textbf{0.9994} & 106.1 & \textbf{28.41} & 108.5 & \textbf{30.81} & 108.1 & \textbf{30.33} & \textbf{90.65} & 53.81 & \textbf{2.169e-18} & 1.118e-15 \\
16 & 0.9933 & \textbf{0.9991} & 0.9942 & \textbf{0.9992} & 140.9 & \textbf{37.75} & 143.1 & \textbf{39.93} & 143 & \textbf{39.88} & \textbf{102.9} & 64.22 & \textbf{2.577e-20} & 1.173e-17 \\
\hline
\end{tabular}
    \caption{As in \cref{tab:boston-stat}, for Miami (unrestricted endpoint fits).}
    \label{tab:miami-stat}
\end{table*}

\begin{table*}[t]
\centering
\begin{tabular}{l|cc|cc|cc|cc|cc|cc|cc}
\hline
 $w$& \multicolumn{2}{c|}{Adj $R^2$ ($\Uparrow$)} & \multicolumn{2}{c|}{$R^2$ ($\Uparrow$)} & \multicolumn{2}{c|}{AIC ($\Downarrow$)} & \multicolumn{2}{c|}{AICc ($\Downarrow$)} & \multicolumn{2}{c|}{BIC ($\Downarrow$)} & \multicolumn{2}{c|}{$t$-statistic ($\Uparrow$)} & \multicolumn{2}{c}{$p$-value ($\Downarrow$)} \\
 \hline
 & Polylog & Poly & Polylog & Poly & Polylog & Poly & Polylog & Poly & Polylog & Poly & Polylog & Poly & Polylog & Poly \\
\hline
1 & \textbf{0.9999} & 0.9978 & \textbf{0.9999} & 0.9979 & \textbf{-16.21} & 34.4 & \textbf{-15.78} & 34.83 & \textbf{-9.98} & 40.63 & 30.93 & \textbf{157.2} & 2.584e-37 & \textbf{6.889e-77} \\
2 & \textbf{0.9999} & 0.9998 & \textbf{0.9999} & 0.9998 & \textbf{55.2} & 57.73 & \textbf{55.64} & 58.17 & \textbf{61.43} & 63.97 & \textbf{54.14} & 3.378 & \textbf{1.044e-50} & 0.00132 \\
3 & \textbf{0.9994} & 0.9985 & \textbf{0.9994} & 0.9986 & \textbf{118} & 152 & \textbf{118.4} & 152.5 & \textbf{124.2} & 158.2 & \textbf{83.86} & 11.75 & \textbf{2.111e-61} & 7.369e-17 \\
4 & \textbf{0.9985} & 0.9971 & \textbf{0.9985} & 0.9972 & \textbf{204.5} & 290.7 & \textbf{204.9} & 291.1 & \textbf{210.7} & 296.9 & \textbf{112.8} & 16.46 & \textbf{1.067e-68} & 2.545e-23 \\
5 & \textbf{0.9982} & 0.9945 & \textbf{0.9983} & 0.9947 & \textbf{250.3} & 468.4 & \textbf{250.7} & 468.8 & \textbf{256.5} & 474.6 & \textbf{130.5} & 30.74 & \textbf{2.693e-72} & 3.595e-37 \\
6 & \textbf{0.9974} & 0.9908 & \textbf{0.9975} & 0.9912 & \textbf{254.7} & 607.7 & \textbf{255.2} & 608.3 & \textbf{260.2} & 613.2 & \textbf{143.7} & 43.47 & \textbf{1.904e-60} & 8.683e-38 \\
7 & \textbf{0.9971} & 0.9866 & \textbf{0.9973} & 0.9875 & \textbf{196.4} & 622.5 & \textbf{197.2} & 623.3 & \textbf{200.9} & 627 & \textbf{146.1} & 57.72 & \textbf{1.418e-45} & 4.047e-33 \\
8 & \textbf{0.9982} & 0.9875 & \textbf{0.9984} & 0.9884 & \textbf{125.9} & 497.1 & \textbf{127} & 498.1 & \textbf{129.8} & 501 & \textbf{147.7} & 75.52 & \textbf{2.691e-38} & 4.996e-31 \\
9 & \textbf{0.9973} & 0.988 & \textbf{0.9975} & 0.989 & \textbf{141.5} & 439 & \textbf{142.7} & 440.2 & \textbf{145.1} & 442.6 & \textbf{147} & 83.28 & \textbf{2.013e-34} & 5.327e-29 \\
10 & \textbf{0.9968} & 0.9897 & \textbf{0.9971} & 0.9906 & \textbf{146.4} & 368.2 & \textbf{147.7} & 369.6 & \textbf{149.7} & 371.5 & \textbf{144.8} & 88.1 & \textbf{1.089e-31} & 2.217e-27 \\
11 & \textbf{0.9954} & 0.995 & \textbf{0.9959} & 0.9955 & \textbf{162.5} & 174.1 & \textbf{164.1} & 175.7 & \textbf{165.4} & 176.9 & \textbf{131.8} & 84.19 & \textbf{4.995e-27} & 1.003e-23 \\
12 & 0.9948 & \textbf{0.996} & 0.9953 & \textbf{0.9964} & 193.7 & \textbf{155.9} & 195.3 & \textbf{157.5} & 196.5 & \textbf{158.7} & \textbf{138.1} & 91.27 & \textbf{2.262e-27} & 2.551e-24 \\
13 & 0.9941 & \textbf{0.9965} & 0.9948 & \textbf{0.9969} & 216.6 & \textbf{141.9} & 218.2 & \textbf{143.5} & 219.4 & \textbf{144.7} & \textbf{139.4} & 94.11 & \textbf{1.914e-27} & 1.517e-24 \\
14 & 0.9934 & \textbf{0.9973} & 0.9941 & \textbf{0.9976} & 245.1 & \textbf{119.3} & 246.7 & \textbf{120.9} & 247.9 & \textbf{122.2} & \textbf{140.7} & 96.76 & \textbf{1.633e-27} & 9.466e-25 \\
15 & 0.9931 & \textbf{0.9975} & 0.9939 & \textbf{0.9978} & 258.5 & \textbf{114.1} & 260.1 & \textbf{115.7} & 261.3 & \textbf{116.9} & \textbf{142.4} & 98.72 & \textbf{1.331e-27} & 6.729e-25 \\
16 & 0.993 & \textbf{0.9975} & 0.9937 & \textbf{0.9978} & 255.5 & \textbf{111.9} & 257.1 & \textbf{113.5} & 258.3 & \textbf{114.7} & \textbf{141.3} & 97.7 & \textbf{1.535e-27} & 8.031e-25 \\
\hline
\end{tabular}
    \caption{As in \cref{tab:boston-stat}, for Miami (restricted).}
    \label{tab:miami-stat-rst}
\end{table*}

As a heuristic summary of the five primary model-comparison diagnostics (adjusted $R^2$, $R^2$, AIC, AICc, and BIC), \cref{tab:count-stat} favors the polylog model on Boston for unrestricted endpoint fits across $n_{\rm end}=11,12,\dots,20$, with $n_{\rm end,c}$ defined as discussed in the main text. On Boston restricted-HW fits, it also favors the polylog model across $w=4,5,\dots,20$.
On Miami unrestricted endpoint fits, the same summary favors the poly model across $n_{\rm end}=10,11,\dots,16$. On Miami restricted-HW fits, it favors the polylog model for $w=3,4,\dots,11$, then transitions to the poly model for $w=12,13,\dots,16$.

\begin{table*}[t]
    \centering
\begin{tabular}{l|l|l |l|*{20}{c}}
\hline
\hline
& Type & threshold(s) & model & 1 & 2 & 3 & 4 & 5 & 6 & 7 & 8 & 9 & 10 & 11 & 12 & 13 & 14 & 15 & 16 &17&18&19&20\\
\hline
\multirow{4}{*}{Boston}&\multirow{2}{*}{unrestricted}
&\multirow{2}{*}{$n_{\rm end,c}=11$}& Polylog &  &  &  &  &  &  & 0 & 0 & 0 & 0 & 5 & 5 & 5 & 5 & 5 & 5 &5&5&5&5\\
  &&& Poly    &  &  &  &  &  &  & 5 & 5 & 5 & 5 & 0 & 0 & 0 & 0 & 0 & 0 &0&0&0&0\\
\cline{2-24}
&\multirow{2}{*}{restricted}
&\multirow{2}{*}{$w_c=4$}& Polylog & 5 & 0 & 5 & 5 & 5 & 5 & 5 & 5 & 5 & 5 & 5 & 5 & 5 & 5 & 5 & 5 &5&5&5&5\\
  &&& Poly    & 0 & 5 & 0 & 0 & 0 & 0 & 0 & 0 & 0 & 0 & 0 & 0 & 0 & 0 & 0 & 0 &0&0&0&0\\
\hline
\multirow{4}{*}{Miami}&\multirow{2}{*}{unrestricted}
&\multirow{2}{*}{$n_{\rm end,c}=10$}& Polylog &  &  &  &  &  &  &0 & 0 & 0 & 0 & 0 & 0 & 0 & 0 & 0 & 0 &&&&\\
  &&& Poly    &  &  &  &  &  &  & 5 & 5 & 5 & 5 & 5 & 5 & 5 & 5 & 5 & 5 &&&&\\
\cline{2-24}
&\multirow{2}{*}{restricted}
&\multirow{2}{*}{$w=3,12$}& Polylog & 5 & 5 & 5 & 5 & 5 & 5 & 5 & 5 & 5 & 5 & 5 & 0 & 0 & 0 & 0 & 0 &&&&\\
  &&& Poly    & 0 & 0 & 0 & 0 & 0 & 0 & 0 & 0 & 0 & 0 & 0 & 5 & 5 & 5 & 5 & 5 &&&&\\
\hline
\end{tabular}
    \caption{
    Counts of the five primary model-comparison diagnostics favoring each model. The numerical columns are fit-coordinate values: $n_{\rm end}$ for unrestricted endpoint fits and $w$ for restricted-HW fits. The threshold column reports the onset of the strongly favored regime; for Miami restricted-HW data, both transition points are shown because the preferred model changes from polylog to poly at larger $w$.}
    \label{tab:count-stat}
\end{table*}

\cref{fig:grid-plot} provides the $\NTS_Q$ data along with the polylog and poly model fits for Hamming weights $w=2,\dots,16$ on both Boston and Miami.

\section{Interpretation of AIC differences via Akaike weights}
\label{app:AIC}

AIC differences are most naturally interpreted through Akaike weights, which quantify the relative support for each candidate model in an information-theoretic sense \cite{Wagenmakers:2004aa,Burnham:book}. For a set of candidate models $\{M_i\}_{i=1}^R$ with AIC values $\mathrm{AIC}_i$, define
\begin{equation}
\Delta_i \equiv \mathrm{AIC}_i-\mathrm{AIC}_{\min},
\end{equation}
where $\mathrm{AIC}_{\min}$ is the smallest AIC among the $R$ models. The relative likelihood of model $M_i$ is then
\begin{equation}
\mathcal{L}_i=\exp\!\left(-\frac{\Delta_i}{2}\right),
\end{equation}
and the corresponding Akaike weight is
\begin{equation}
w_i=\frac{\exp(-\Delta_i/2)}{\sum_{r=1}^{R}\exp(-\Delta_r/2)}.
\label{eq:akaike-weight}
\end{equation}
The weights satisfy $0\le w_i\le 1$ and $\sum_i w_i=1$, and may therefore be interpreted as probability-like measures of support for each model within the candidate set.

In the two-model comparison used in this work, we define the signed AIC difference as
\begin{equation}
\Delta \mathrm{AIC}\equiv \mathrm{AIC}(\text{poly})-\mathrm{AIC}(\text{polylog}).
\end{equation}
If $\Delta \mathrm{AIC}>0$, then the polylog model has the lower AIC and is preferred; if $\Delta \mathrm{AIC}<0$, then the poly model is preferred. In either case, the relative likelihood of the worse model is
\begin{equation}
\mathcal{L}_{\mathrm{worse}}=\exp\!\left(-\frac{|\Delta \mathrm{AIC}|}{2}\right).
\label{eq:relative-likelihood-worse}
\end{equation}

For our rule-of-thumb threshold $|\Delta \mathrm{AIC}|=10$, this becomes
\begin{equation}
\mathcal{L}_{\mathrm{worse}}=\exp(-5)=6.74\times 10^{-3}\approx 0.0067.
\end{equation}
Thus, the worse model has less than $1\%$ of the support of the better model. In the two-model case, the corresponding Akaike weights are
\begin{equation}
w_{\mathrm{better}}=\frac{1}{1+e^{-5}}\approx 0.9933,
\qquad
w_{\mathrm{worse}}=\frac{e^{-5}}{1+e^{-5}}\approx 0.0067.
\end{equation}
This provides a justification for the criterion $|\Delta \mathrm{AIC}|\gtrsim 10$~\cite{Wagenmakers:2004aa}: the lower-AIC model is then overwhelmingly favored in the Akaike-weight sense.

\section{NISQ post-processing algorithm}
\label{app:algo-nisq}

Here we present the algorithm (\cref{algo:new}) used by our NISQ player for post-processing the stream of $z$ values: each $z$ is a bitstring representing the measurement outcome from executing the Simon circuit for a given hidden bitstring $b$.
The algorithm is similar to \cite[Appendix I]{PhysRevX.15.021082} but corrects a small inaccuracy in the handling of $z = 0$ bitstrings in the computation of $\hat{f}(i)$ (defined below).
Additionally, its implementation is improved to allow us to extend our analysis to larger values of $n$ than what was possible using the original implementation of Ref.~\cite{PhysRevX.15.021082}.

Let $Z$ be a random variable indicating the next observed bitstring in the stream,
let $B$ be the random variable describing the true value of the hidden bitstring $b \in \mathcal{B}$,
where $\mathcal{B} = \{b \in \{0, 1\}^n: \HW(b) \in [1, w]\}$ is the set of all possible options for the hidden bitstring.
The algorithm is based on keeping track of all posterior probabilities $\Pr(B=b)$ for all $b \in \mathcal{B}$. Initially, the prior probability distribution of $B$ is uniform. According to Bayes' formula, the update on observing the new value $z$ is
\begin{equation}
\Pr(B = b | Z = z) = 2^{n-1} \Pr(Z = z| B=b) \Pr(B=b) / C,
\label{eq:bayes-update-def}
\end{equation}
where $C$ is a normalization factor to ensure $\sum_{b} \Pr(B = b| Z = z) = 1$. 
Here we introduced a constant $2^{n-1}$ for convenience in the computations below; it does not affect the correctness of Bayes' formula, since it can be absorbed into $C$.
Following the exponential error model presented in \cite[Section IV.B.1]{PhysRevX.15.021082}, we 
use a likelihood model in which $\Pr(Z = z | B = b)$ depends only on $z \cdot b$ and $\HW(b)$. In particular, 
conditioned on $B = b$, 
all bitstrings satisfying $z\cdot b=0$ are assigned the same probability, including $z=0$, and all bitstrings satisfying $z\cdot b=1$ are assigned the same probability. Thus, there is a function $f\colon \{1, \dots, w\} \to [0, 1]$ such that
\begin{equation}
  2^{n-1} \Pr(Z = z| B = b) = \begin{cases}
    f(\HW(b)) \textrm{ if } z \cdot b = 0,\\
    1 - f(\HW(b)) \textrm{ if } z \cdot b = 1.
  \end{cases}
  \label{eq:Z-given-B-model}
\end{equation}
The function $f(i)$ describes the quality of the device: $f(i) = 1$ for an ideal quantum device and $f(i) = 0.5$ for a device returning uniformly random measurement results independently of the circuit given to it. The posterior probabilities computed by \cref{algo:new} should therefore be understood as posterior probabilities under this likelihood model.

\begin{algorithm}[H]
\caption{Solving for $b$ 
using estimates $\hat{f}(i)$ 
of
$f(i)=\Pr(z\cdot b=0 \mid \HW(b)=i)$ for $i=1,\dots,w$}

\label{algo:new}
\begin{algorithmic}
\Require $n,w$, stream $\mathcal{Z}=\{z_1,z_2,\dots,z_m\}$ of bitstrings, $\{\hat{f}(i)\}_{i=1,\dots,w}$, threshold $\theta \in (0, 1)$.
\Ensure A maximum-a-posteriori estimate of $b$ together with the posterior distribution over $\mathcal{B}=\{b_1,b_2,\dots,b_{N_w}\}$
\State Initialize $\mathcal{B}=\{b_1,b_2,\dots,b_{N_w}\}$
\State Initialize $\Pr_{\mathrm{pre}}(b)=\Pr_{\mathrm{post}}(b)=1/N_w$ for all $b\in \mathcal{B}$
\For{$z \in \mathcal{Z}$}
    \For{$b \in \mathcal{B}$}
        \If{$z\cdot b = 0$}
            \State $\Pr_{\mathrm{post}}(b)\gets \Pr_{\mathrm{pre}}(b)\,\hat{f}(\HW(b))$
        \Else
            \State $\Pr_{\mathrm{post}}(b)\gets \Pr_{\mathrm{pre}}(b)\,[1-\hat{f}(\HW(b))]$
        \EndIf
    \EndFor
    \State $C\gets \sum_{b\in \mathcal{B}}\Pr_{\mathrm{post}}(b)$
    \For{$b \in \mathcal{B}$}
        \State $\Pr_{\mathrm{post}}(b)\gets \Pr_{\mathrm{post}}(b)/C$
        \State $\Pr_{\mathrm{pre}}(b)\gets \Pr_{\mathrm{post}}(b)$
    \EndFor
    \If{$\max_{b\in \mathcal{B}}\Pr_{\mathrm{post}}(b) \geq \theta$}
        \State \textbf{break}
    \EndIf
\EndFor
\State \Return $\arg\max_{b\in \mathcal{B}}\Pr_{\mathrm{post}}(b)$, $\{\Pr_{\mathrm{post}}(b)\}_{b \in \mathcal{B}}$
\end{algorithmic}
\end{algorithm}

We estimate $f(i)$ using $\hat{f}(i)$ given by
\begin{equation}
  \hat{f}(i) = \frac{\textrm{number of shots with $\HW(b) = i$ and $z \cdot b = 0$}}{\textrm{number of shots with $\HW(b) = i$}}.
  \label{eq:hatfi-def}
\end{equation}
To avoid the risk of data snooping, the estimate is computed from
calibration experiments performed separately from the experiments
that generate the stream $\mathcal{Z} = \{z_1,z_2,\dots,z_m\}$ given to the algorithm. Our posterior update uses $\hat{f}(i)$ in place of (unknown) $f(i)$. Since, under our model, $f(i)$ is expected to lie in $[0.5, 1]$, and $f(i) = 0.5$ provides no information for distinguishing hidden strings of Hamming weight $i$, we do not compute $\NTS_Q(n;w)$ for devices and pairs $(n, w)$ for which $\max_{1\le i\le w} \hat{f}(i) \leq 0.5$ (see \cref{fig:summary-range}).
For each value of $n$ shown, the cross markers in \cref{fig:z=0} indicate the weighted average of $\hat f(i)$ over the Hamming weights $i = 1, \dots, n$, namely $\frac{\sum_{i=1}^{n} \binom{n}{i}\hat f(i)}{2^n-1}$.

\begin{figure}[h]
    \centering
    \includegraphics[width=0.48\textwidth]{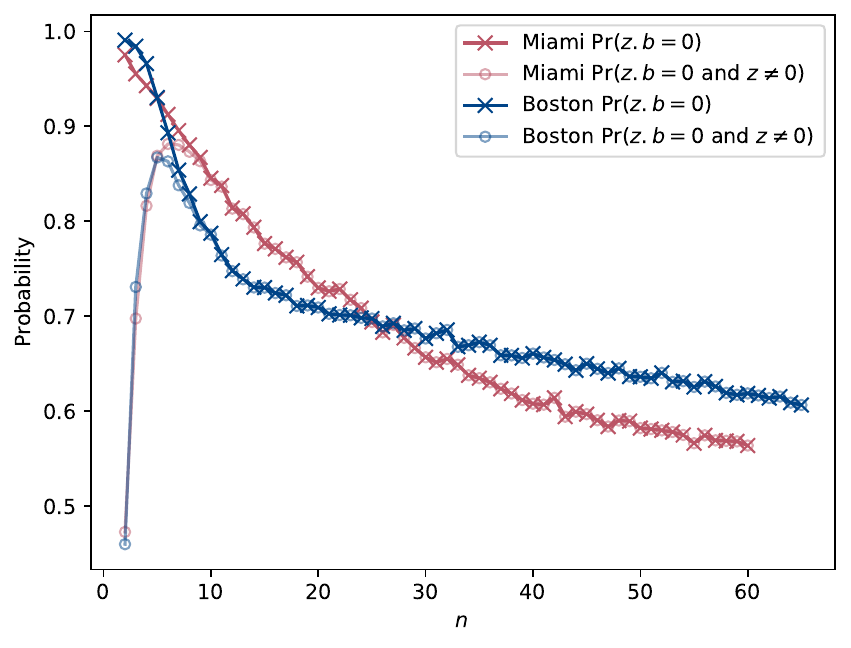}
    
    \caption{Probabilities of obtaining a valid equation, $\Pr(z \cdot b = 0)$ (cross markers) and a valid nonzero equation, $\Pr(z\cdot b = 0\land z\neq 0)$ (open-circle markers) on Miami (red) and Boston (blue), as functions of problem size $n=2,3,\dots,n_{\max}$, where $n_{\max} = 60$ (Miami) and $n_{\max} = 65$ (Boston). Each data point is averaged over Hamming-weight classes $i = 1, 2, \dots, n$ with weights proportional to $\binom{n}{i}$.
    The probability $\Pr(z \cdot b = 0)$ serves as a direct measure of circuit performance, since an ideal Simon circuit has $\Pr(z \cdot b = 0) = 1$.}
    \label{fig:z=0}
\end{figure}

The player starts by assigning equal prior probability to each candidate hidden string $b$. After each circuit output $z_j$, the player updates the posterior probabilities according to
\begin{equation}
    \Pr_{\textrm{post}}(B = b) := g(z \cdot b; \HW(b)) \Pr_{\textrm{pre}}(B = b) / C,
\end{equation}
where $g(z \cdot b; \HW(b)) = \hat{f}(\HW(b))$ if $z \cdot b = 0$ and $1 - \hat{f}(\HW(b))$ otherwise, and $C$ is the normalization constant to ensure the new probabilities sum to $1$.
The prior for the next query (i.e., next $z$) is then set equal to the posterior from the current update.

Once the posterior probability of one of the bitstrings $b$ reaches a threshold $\theta = 0.8$, our NISQ player stops consuming the queries and guesses that $b$.

The exhaustive posterior update in \cref{algo:new} scales with $N_w$ and becomes memory-limited in the unrestricted case. More efficient information-set-decoding-style post-processing could likely extend the accessible range, but developing and benchmarking such post-processing is outside the scope of this work.

\begin{figure}[h]
    \centering
    \includegraphics[width=0.48\textwidth]{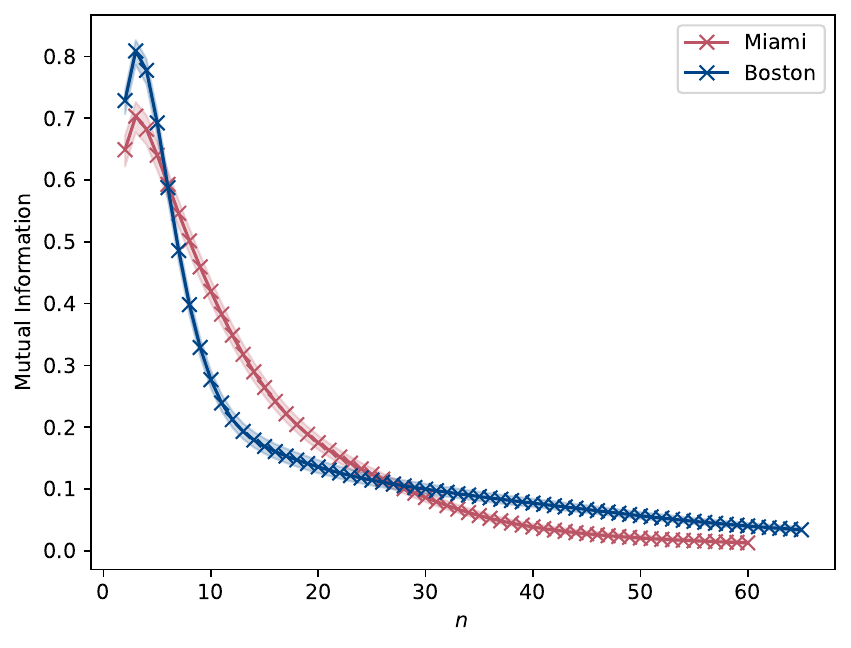}
    \caption{Mutual information obtained per shot on Miami (red) and Boston (blue), as functions of problem size $n=2,3,\dots,n_{\max}$ for the unrestricted Simon problem, where $n_{\max}=60$ on Miami and $n_{\max}=65$ on Boston.}
    \label{fig:mutual_info}
\end{figure}

\begin{figure*}[t]
    \centering
    \includegraphics[width=0.48\textwidth]{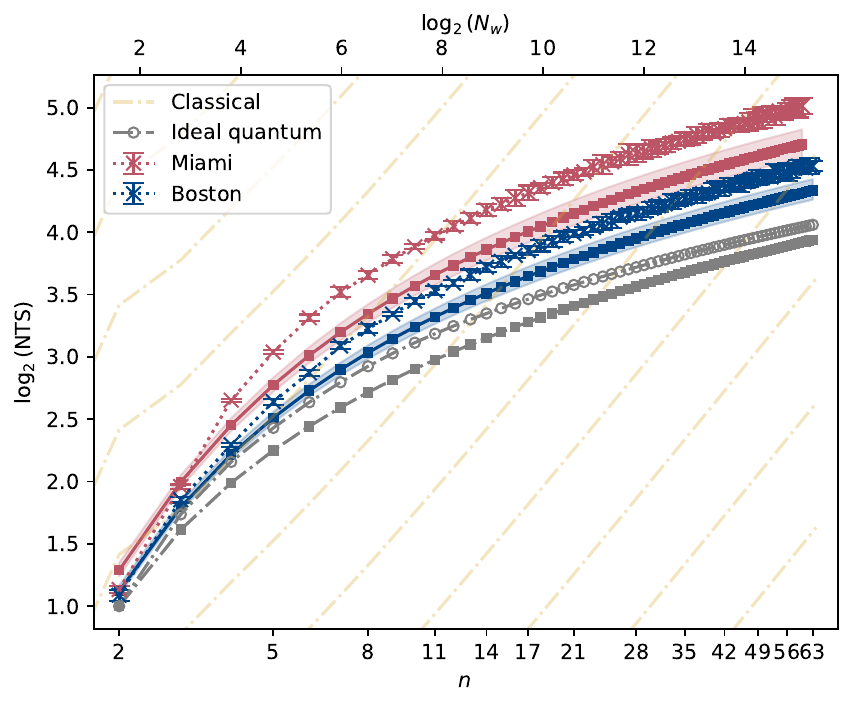}
    \includegraphics[width=0.48\textwidth]{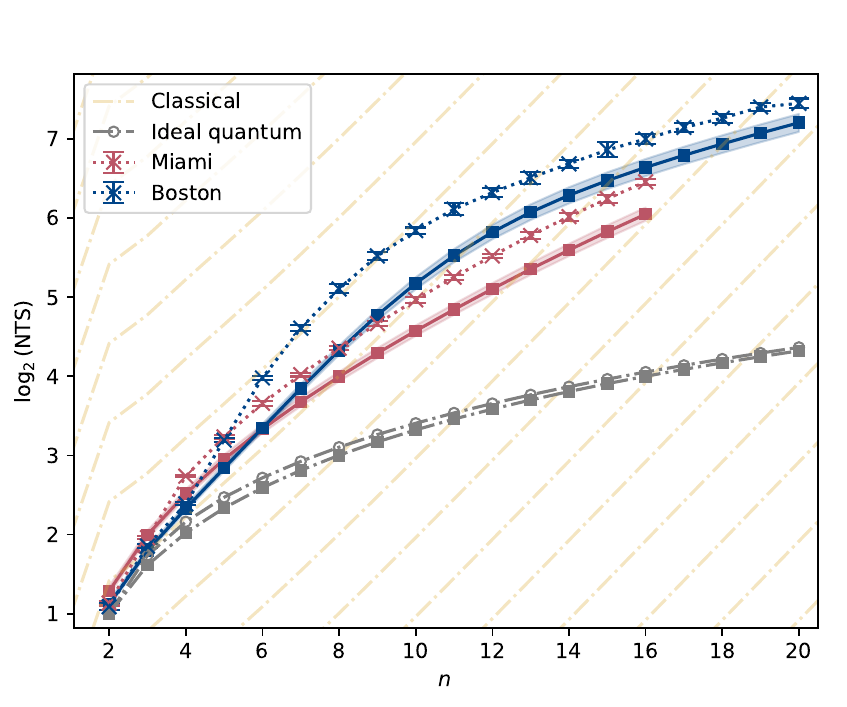}
\caption{Estimates of $\NTS_Q$ via mutual information analysis. Cross markers show direct post-processing estimates for the restricted-HW case with $w=3$, and circle markers show direct post-processing estimates for the unrestricted case $w=n$, both without the polylog and poly model fits. Square markers denote the mutual-information-based estimates for Miami (red), Boston (blue), and the ideal quantum case (gray), consistent with the colors in the legend. Shaded regions indicate $95\%$ Bayesian credible intervals due to the finite number of shots used to estimate $f(i)$.} 
    \label{fig:ntsq-tmi}
\end{figure*}

\section{Results with the help of Q-CTRL}
We also ran experiments on both Boston and Miami using Q-CTRL Fire Opal \cite{fire_opal}, accessed via the Qiskit Function interface. Q-CTRL applies a suite of automated optimizations to IBM QPUs, including circuit depth reduction, error-aware hardware mapping, DD for crosstalk suppression, optimized gate replacement, and MEM \cite{PhysRevApplied.20.024034}. As reported below, the effect of Q-CTRL is device- and metric-dependent. The Q-CTRL data provide a useful supplementary comparison, but because Fire Opal operates as a black box, the details of its internal optimizations are inaccessible to the user. As a result, these results should not be interpreted as part of the compiler-restricted speedup demonstration, as we are unable to verify that the Q-CTRL experiments comply with the compilation restrictions defined in \cref{app:rules:oracle}.

We performed the same experiments on Boston and Miami using Q-CTRL, where the only input required to Fire Opal was the quantum circuit constructed 
using the method described in \cref{sec:o1-oracle,app:simon-oracle}
with all subsequent optimizations handled automatically. We first examine the 
probability $\Pr(z \cdot b = 0)$, which provides a direct measure of 
whether the circuit is functioning correctly. 
The corresponding Q-CTRL results are shown in \cref{fig:z=0-qctrl}.

\begin{figure}[H]
    \centering
    \includegraphics[width=\linewidth]{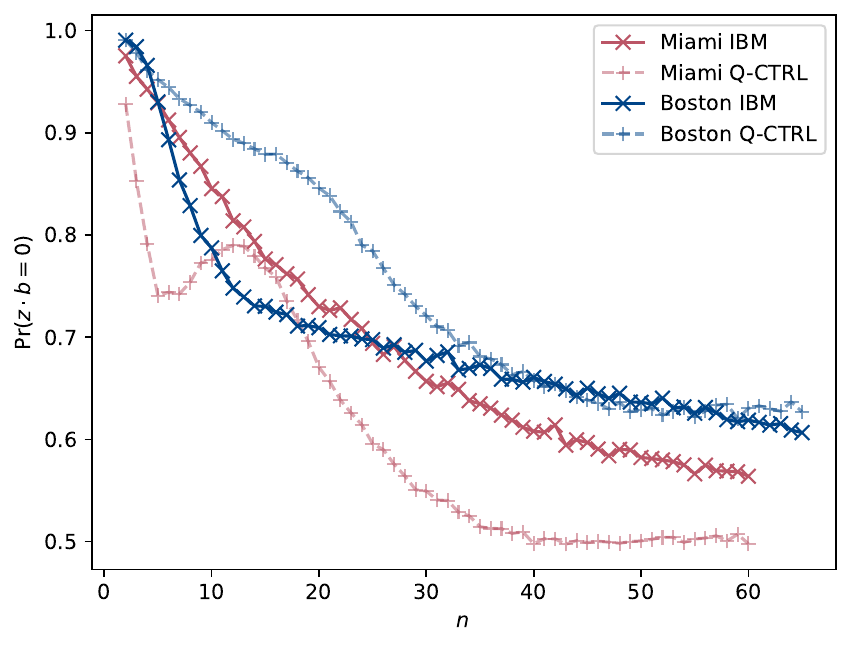}
    \caption{Probability of obtaining $z \cdot b = 0$ as a function of problem size $n$, reproducing the results of \cref{fig:z=0} with the addition of data obtained via Q-CTRL.}
    \label{fig:z=0-qctrl}
\end{figure}

The Q-CTRL results are device-dependent:
on Boston they remain usable over a broad range of problem sizes, whereas on Miami they degrade performance in most cases.
For Miami, the Q-CTRL results yield meaningful estimates only for $w = 1$ and $w = 2$; we therefore restrict the remaining Q-CTRL analysis to Boston.

\Cref{fig:main_nmax_qctrl,fig:main_full_qctrl,fig:R2-AIC-qctrl}
are the Boston analogues of \cref{fig:main-plot-nmax,fig:main-plot-original,fig:R2-AIC} with Q-CTRL data added. The NTS plots show that Q-CTRL changes both the prefactors and the fitted trends. In particular, the absolute NTS values are not uniformly improved relative to the baseline, so these data should be viewed as a supplementary robustness comparison rather than as an improvement over the compiler-restricted results.
\Cref{fig:R2-AIC-qctrl} further shows that, on Boston, the Q-CTRL data are broadly consistent with the conclusion that the polylog model is competitive or favored over much of the plotted range, although the AIC separations are smaller in some regimes. We therefore treat the Q-CTRL results as a robustness comparison rather than as an independent compiler-restricted speedup claim.

\Cref{fig:grid-plot-qctrl} provides the $\NTS_Q$ data, along with the polylog and poly model fits, for Hamming weights $w\in[1,15]$ on Boston with and without Q-CTRL. The maximum problem size used in the fit for each $w$ is shown in \cref{fig:summary-range-qctrl}.

\begin{figure}[H]
    \centering
    \includegraphics[width=\linewidth]{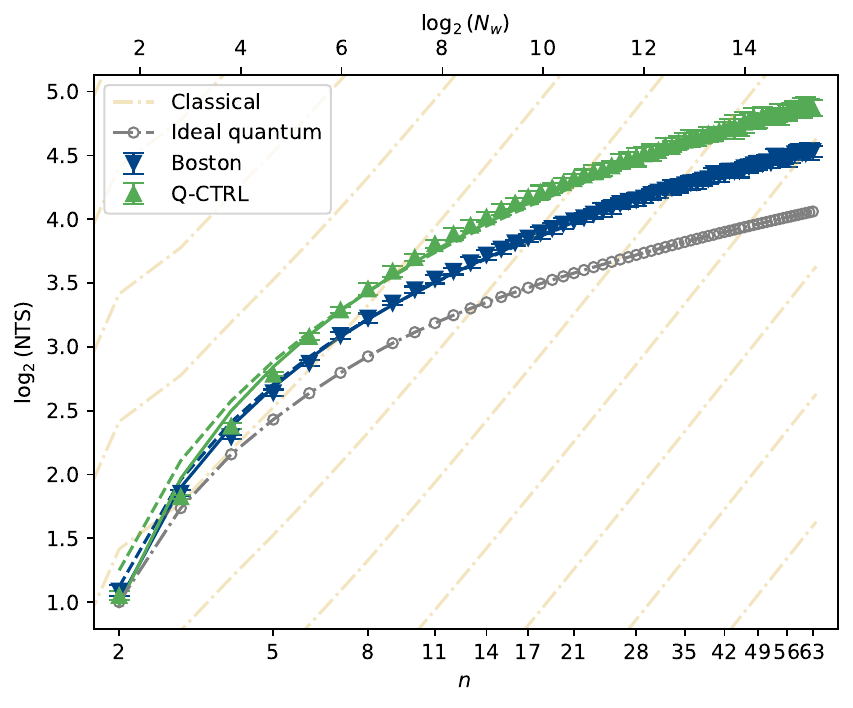}
    \caption{As in \cref{fig:main-plot-nmax} for Boston but with Q-CTRL results added.}
    \label{fig:main_nmax_qctrl}
\end{figure}

\begin{figure}[H]
    \centering
    \includegraphics[width=\linewidth]{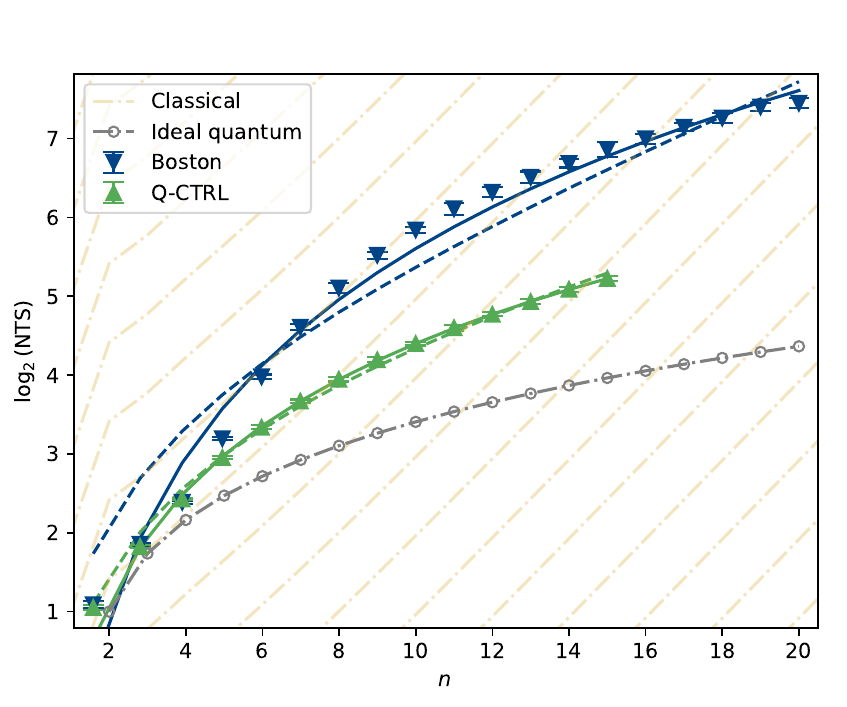}
    \caption{As in \cref{fig:main-plot-original} for Boston but with Q-CTRL results added.}
    \label{fig:main_full_qctrl}
\end{figure}

\begin{figure*}[t]
    \centering
    \includegraphics[width=0.9\textwidth]{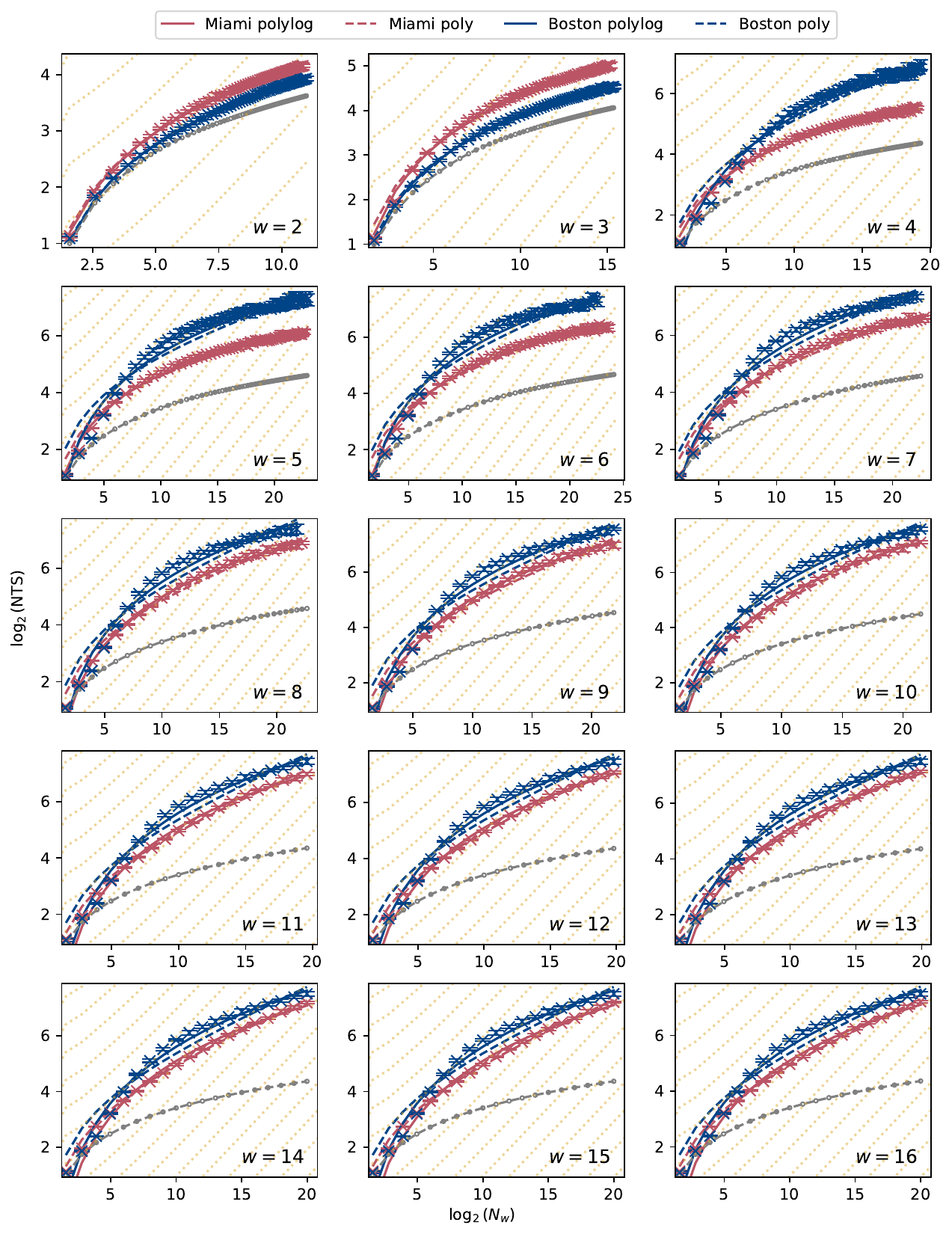}
    \caption{$\log_2(\NTS_Q)$ as a function of $\log_2(N_w)$ for \wSimon{w}{n}
    and $w\in[2,16]$, for Boston (blue) and Miami (red).}
    \label{fig:grid-plot}
\end{figure*}

\begin{figure*}[t]
    \centering
    \includegraphics[width=0.49\linewidth]{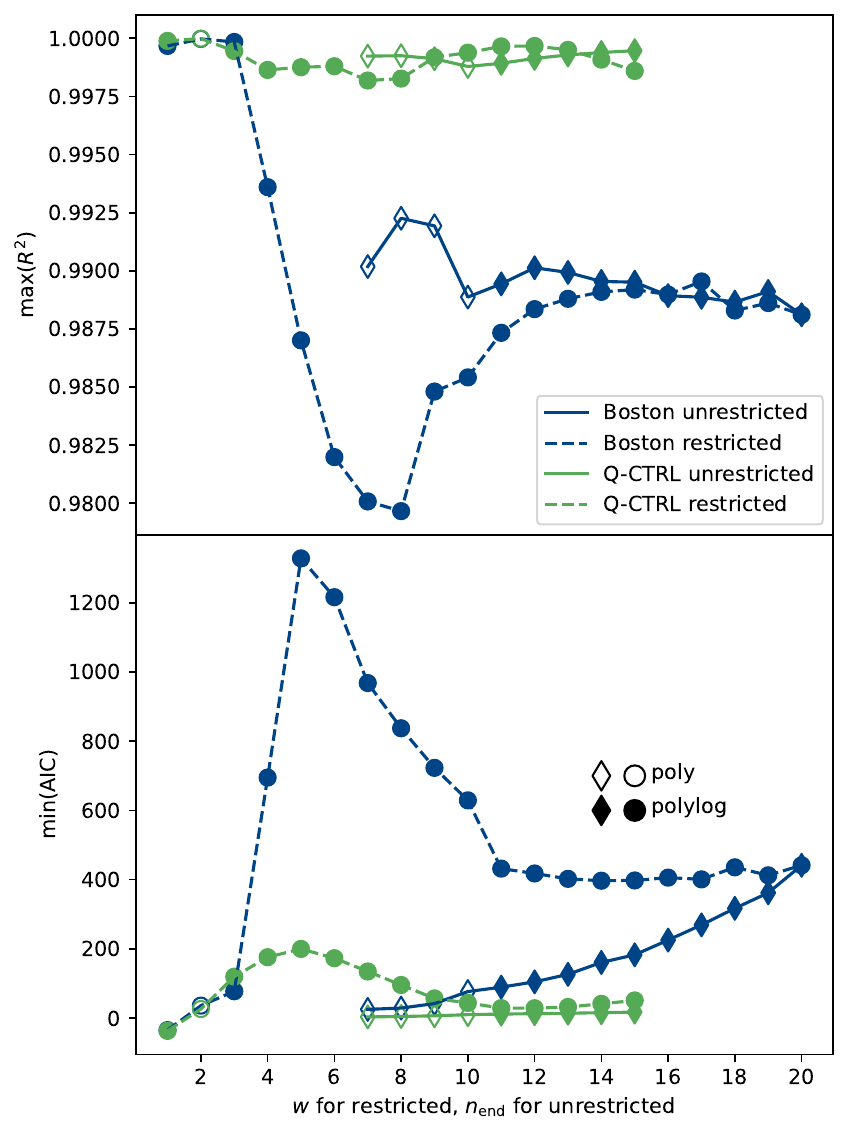}
    \includegraphics[width=0.48\linewidth]{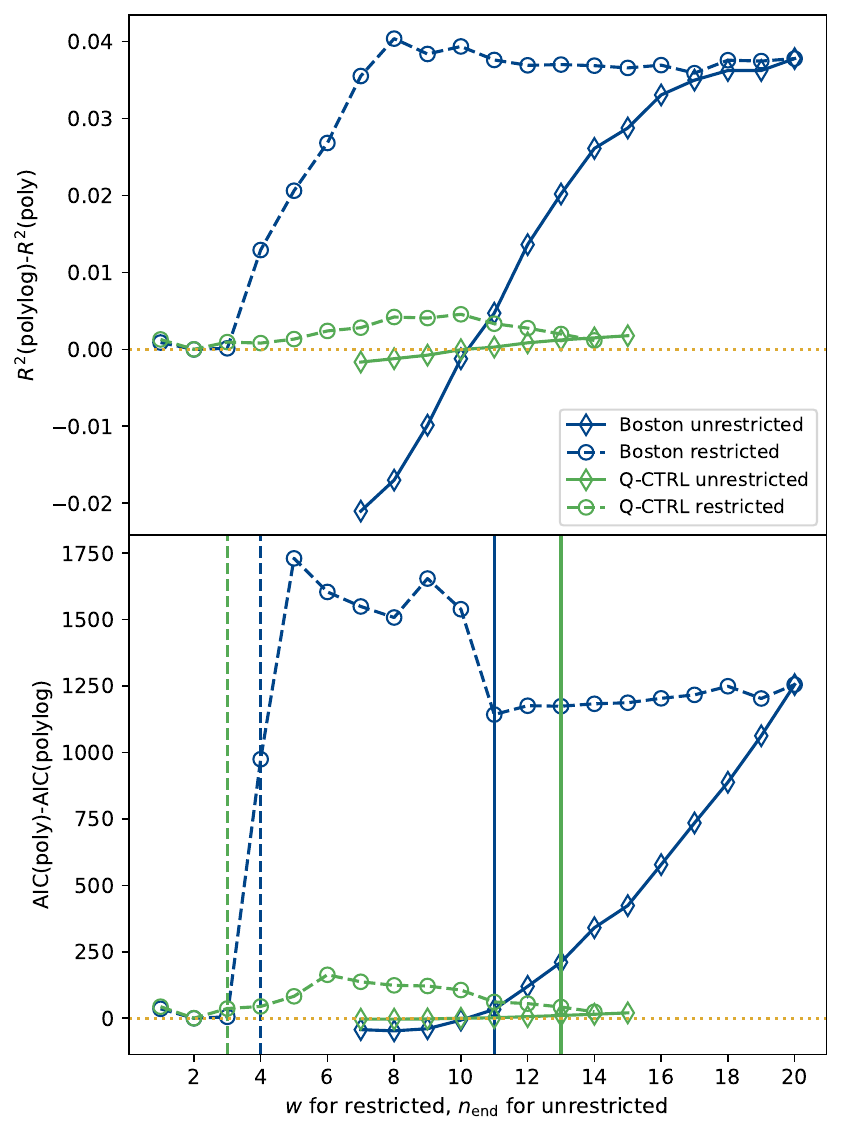}
\caption{Model-comparison diagnostics for the Boston data with and without Q-CTRL, shown in the same format as \cref{fig:R2-AIC}. Blue curves denote the baseline Boston data, and green curves denote the Q-CTRL data. Solid lines with diamond markers correspond to unrestricted Simon endpoint fits, for which the horizontal coordinate is $n_{\rm end}$; dashed lines with circle markers correspond to restricted-HW fits, for which the horizontal coordinate is $w$. When the two finite-window fits have nearly indistinguishable residuals, $\Delta\mathrm{AIC}$ becomes small; this occurs for some of the Q-CTRL fits and should be interpreted as weak model discrimination rather than as decisive evidence for either model.}
    \label{fig:R2-AIC-qctrl}
\end{figure*}

\begin{figure*}
    \centering
    \includegraphics[width=\linewidth]{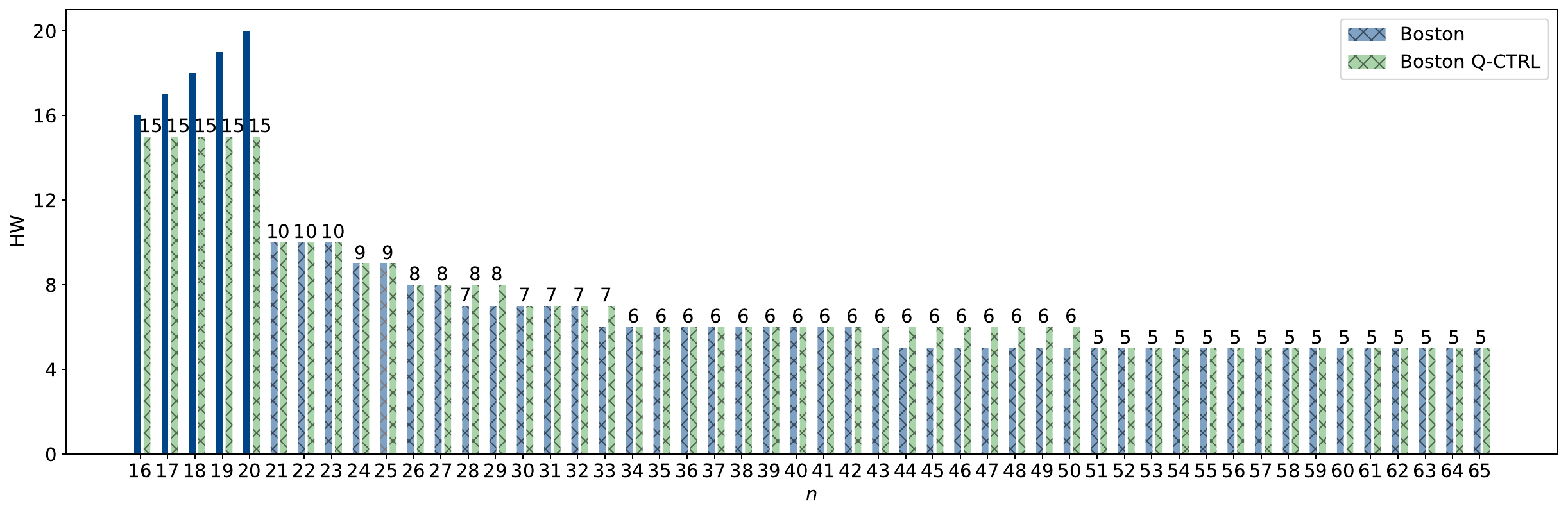}
    \caption{As in \cref{fig:summary-range}, for Boston with and without Q-CTRL.}
    \label{fig:summary-range-qctrl}
\end{figure*}

\begin{figure*}[t]
    \centering
    \includegraphics[width=0.9\textwidth]{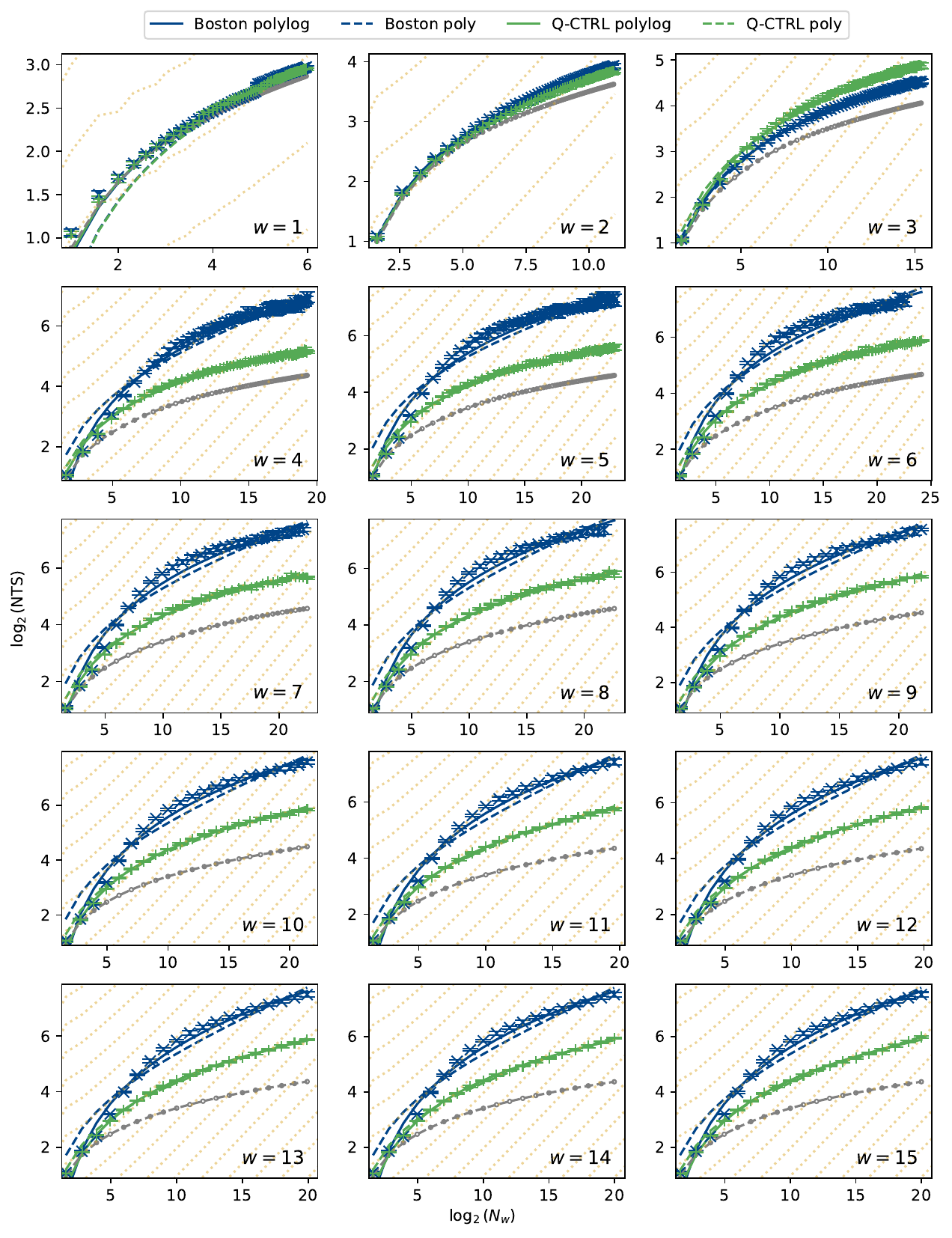}
    \caption{$\log_2(\NTS_Q)$ as a function of $\log_2(N_w)$ for HW cutoffs $w\in[1,15]$, for Boston with Q-CTRL (green, plus markers) and without Q-CTRL (blue, cross markers).}
    \label{fig:grid-plot-qctrl}
\end{figure*}

\section{Mutual information}\label{app:mutual-info}
In this section we compute the classical mutual information $I(Z; B)$
which can be learned by the NISQ player from the first query in \wSimon{w}{n} problem under the same model \cref{eq:Z-given-B-model}
as used by \cref{algo:new}.
Similarly to \cite[Appendix H]{PhysRevX.15.021082},
we can express $I(Z; B)$ as
\begin{equation}
    \begin{aligned}
    I(Z;B)&=S(Z)-S(Z|B) \\
     &=\sum_b \Pr(b)\sum_{z} \Pr(z|b) \log_2 \frac{\Pr(z|b)}{\Pr(z)},
    \end{aligned}
    \label{eq:info}
\end{equation}
where $S(X)$ denotes the Shannon entropy of a random variable $X$.
However, unlike \cite[Appendix H]{PhysRevX.15.021082}, we cannot factor
out the sum over $b$ because the inner sum depends on the Hamming weight of $b$.

Let $s_2: [0,1]\to[0,1]$ be the binary entropy function, defined by $s_2(p)=-p\log_2 p-(1-p)\log_2(1-p)$. Here and below we define $x \log_2(x)$ to be equal to $0$ for $x = 0$.
Then we can compute
\begin{equation}
  S(Z|B) = (n-1) + \frac{1}{N_w} \sum_{i=1}^{w} \binom{n}{i} s_2(f(i)).
  \label{eq:SZB-formula}
\end{equation}
Let $k = \HW(z)$. Then
\begin{multline}
  \Pr(Z=z) = 2^{-n} + 2^{-n} \frac{1}{N_w} \sum_{i=1}^{w} (2f(i) - 1) K_i(k; n) \\
  = 2^{-n} q_k^{(n,w)},
\end{multline}
where
\begin{equation}
   q_k^{(n,w)} = 1 + \sum_{i=1}^{w} (2 f(i) - 1) K_i(k; n) / N_w,
   \label{eq:qk-def}
\end{equation}
and $K_i(\bullet; n)$ denotes the binary Kravchuk polynomial given by
\begin{equation}
    K_i(x; n) = \sum_{j=0}^{i} (-1)^j \binom{x}{j} \binom{n-x}{i-j}.
    \label{eq:Kixn-def}
\end{equation}
Here $j$ plays the role of the number of bits which are set to $1$ in both $z$ and $b$. \Cref{eq:Kixn-def} uses the convention that $\binom{a}{b} = 0$ for $b < 0$ or $b > a$.
Then
\begin{multline}
    S(Z) = -\sum_{z \in \{0, 1\}^n} \Pr(Z = z) \log_2(\Pr(Z = z)) = \\
    n - \sum_{k=0}^{n} 2^{-n} \binom{n}{k} q_k^{(n,w)} \log_2(q_k^{(n,w)}).
    \label{eq:SZ-formula}
\end{multline}

We now describe how we estimate $I(Z; B)$ in practice. First,
we estimate $f(i)$ using an estimate $\hat{f}(i)$ computed
as in \cref{eq:hatfi-def} but now using all available data
instead of restricting to separate calibration experiments.

We compute all Kravchuk polynomial values with $n \leq n_{\mathrm{max}} = 65$ using the relations
\begin{align}
    K_{i}(k; n) &= K_{i}(k-1; n-1) - K_{i-1}(k-1; n-1) \textrm{ if $k > 0$,}\\
    K_{i}(k; n) &= 0 \textrm{ if $i \notin [0, n]$,}\\
    K_{i}(0; n) &= \binom{n}{i}.
\end{align}
We then evaluate \cref{eq:qk-def}, \cref{eq:SZ-formula}, and \cref{eq:SZB-formula}, and compute $\hat{I}(Z;B)=\hat{S}(Z)-\hat{S}(Z|B)$ as in \cref{eq:info}.

For uncertainty quantification, we compute Bayesian credible intervals by replacing $\hat{f}(i)$ in the computation above with posterior samples under Jeffreys prior.

We also compute a heuristic estimate for $\NTS_Q$ using $I(Z; B)$ by dividing the entropy of the distribution of $B$ ($\log_2(N_w)$) by the amount of information learned on the first query ($I(Z; B)$):
\begin{equation}
    \NTS_{Q, \mathrm{TMI}} = \log_2(N_w) / I(Z; B).
\end{equation}
This estimate is based on the fact that in $T$ queries we can
learn at most $T I(Z; B)$ bits due to tensorization of mutual information
(also known as subadditivity of mutual information under conditional independence, see \cite[Lemma 2]{scarlett2019introductory}):
\begin{equation}
I(B; Z_1, \dots, Z_T) \leq \sum_{t=1}^T I(B; Z_t) = T I(Z; B).
\label{eq:TMI}
\end{equation}
Thus, $\NTS_{Q, \mathrm{TMI}}$ is a lower bound on the number of queries to learn $\log_2(N_w)$ bits of information about $B$, i.e., to learn the value of $B$.

However, it is not a rigorous lower bound on $\NTS_Q$,
because the player can benefit from guessing the value of $B$ before
they are confident in what $B$ is: while the definition of $\expv{P}$ used in \cref{eq:NTS.def,eq:NTS-wn} penalizes incorrect guesses, such penalty is not enough to ensure $\NTS_{Q, \mathrm{TMI}}$ is a lower bound for $\NTS_Q$ as shown in the following example. Let $n = 2$, $w = 1$, i.e. consider a \wSimon{1}{2} problem. Let $f(1) = p \in (0.5, 1)$. Then
\begin{equation}
    \NTS_{Q, \mathrm{TMI}} = 2 / (1 - s_2(p)).
\end{equation}
Since $z = 00$ and $z = 11$ provide no information about $B$,
the player can use the following strategy. Query until $z \notin \{00,11\}$ is obtained, and then guess the most likely $b$ ($b = 01$ for $z = 10$ and $b = 10$ for $z = 01$). For that strategy we have
\begin{equation}
    \NTS_Q = 2 / (2p - 1)
\end{equation}
For $p \in (0.5, 1)$, $2p-1>1-s_2(p)$. Indeed, the function $1-s_2(p)-(2p-1)$ has second derivative $1/[\ln(2)p(1-p)]>0$ and vanishes at $p=0.5$ and $p=1$, so it is negative between these endpoints. Hence, in this example $\NTS_Q < \NTS_{Q,\mathrm{TMI}}$.

Typically, however, we would expect that $\NTS_{Q, \mathrm{TMI}} < \NTS_Q$ because on subsequent queries there is less information about $B$ left to learn, and, hence, less information is learned per query (this is more rigorously stated as \cref{eq:TMI} above).

$I(Z; B)$ and $\NTS_{Q, \mathrm{TMI}}$ are illustrated on \cref{fig:mutual_info} and \cref{fig:ntsq-tmi}, respectively.

\end{document}